\def\maxlink{\mathop{\mathrm{max-link}}}
\def\minpow{\mathop{\mathrm{min-power}}}
\newtheorem{lemma}{Lemma}
\newtheorem{proposition}{Proposition}
\newtheorem{coro}{Corollary}
\newtheorem{remark}{Remark}
\begin{document}

\title{A Buffer-aided Successive Opportunistic Relay Selection Scheme with Power Adaptation and Inter-Relay Interference Cancellation for Cooperative Diversity Systems}

\author{Nikolaos Nomikos,~\IEEEmembership{Student Member,~IEEE,} Themistoklis Charalambous,~\IEEEmembership{Member,~IEEE,}  Ioannis Krikidis,~\IEEEmembership{Senior Member,~IEEE,} \\  Dimitrios N. Skoutas,~\IEEEmembership{Member,~IEEE,} Demosthenes Vouyioukas,~\IEEEmembership{Member,~IEEE,} \\ and Mikael Johansson,~\IEEEmembership{Member,~IEEE.} 
\thanks{N. Nomikos, D. Vouyioukas and D. N. Skoutas are with the Department of Information and Communication Systems Engineering, University of the Aegean, Karlovassi 83200, Samos, Greece (E-mails: {\tt \{nnomikos,dvouyiou,d.skoutas\}@aegean.gr}).}
\thanks{T. Charalambous and M. Johansson are with the Automatic Control Lab, Electrical Engineering Department and ACCESS Linnaeus Center, Royal Institute of Technology (KTH), Stockholm, Sweden (E-mails:  {\tt \{themisc,mikaelj\}@kth.se}). }
\thanks{I. Krikidis is with the ECE Department, University of Cyprus, Nicosia 1678 (E-mail:  {\tt krikidis@ucy.ac.cy}).}
\thanks{Preliminary results of this work have been published in \cite{2013:PIMRC}. This paper extends the work further through detailed algorithm description, discussion on channel state information schemes, outage probability and diversity analysis, extra illustrative examples and performance evaluation scenarios based on various system parameters.}
}

\maketitle

%
%
\vspace{-0.0cm}
\begin{abstract}
In this paper, we present a relaying scheme which combines the spectral efficiency of successive opportunistic relaying with the robustness of single-link selection. More specifically, we propose the $\minpow$ scheme that minimizes the total energy expenditure per time slot under an inter-relay interference (IRI) cancellation scheme. It is the first time that interference cancellation is combined with buffer-aided relays and power adaptation to mitigate the IRI and minimize the energy expenditure. The new relay selection policy is analyzed in terms of outage probability and diversity by modeling the evolution of the relay buffers as a Markov Chain. We construct the state transition matrix of the Markov Chain and obtain its stationary distribution, which in turn, yields the outage probability. The proposed scheme outperforms relevant state-of-the-art relay selection schemes in terms of throughput, diversity and energy efficiency, as demonstrated via representative numerical examples. 
\end{abstract}

\begin{keywords}
Cooperative relaying, relay selection, power minimization, inter-relay interference, Markov Chains.
\end{keywords}

\IEEEpeerreviewmaketitle

\vspace{-0.0cm}
%
%
\section{Introduction}

Cooperative relaying is an efficient technique to combat fading and path-loss effects in wireless systems. It enables multiple nodes to create virtual multiple-input multiple-output (MIMO) configurations in order to provide spatial transmit and/or receive diversity to single-antenna destinations \cite{LAN}. Traditional cooperative systems are characterized by the half-duplex constraint that relay nodes cannot receive and transmit data simultaneously, a fact that results in bandwidth loss. In order to overcome this bandwidth limitation, several techniques have been proposed in the literature (see, for example, \cite{DIN}). Among them, the successive relaying scheme in \cite{FAN, chao_ISIT} incorporates multiple relay nodes and proposes a transmission overlap (source-relay, relay-destination) in order to mimic an ideal full-duplex transmission. When IRI is strong (in co-located or clustered relays) it is assumed that it can always be decoded at the affected nodes; this decoded IRI is exploited in a superposition coding scheme that improves the diversity-multiplexing tradeoff (DMT) performance of the system. Regarding the DMT of successive relaying, the authors in \cite{WIC_TCOM} have proved that when perfect decoding at the relays takes place a 3X1 multiple-input single-output (MISO) DMT performance can be achieved. On the other hand, relay selection has been introduced as a spectrally efficient solution that achieves full diversity by requiring only one additional orthogonal channel. In addition, it reduces the complexity of the network since multi-relay schemes rely on distributed space-time codes and the coordination needed among the relays \cite{BLE}. In earlier works, in which relays were assumed to lack data buffers, relay selection was mainly based on the $\max-\min$ criterion and its variations (see, for example, \cite{BLE,KAR,KRI_minmax, BLE1} and references therein). As a result, based on either proactive or reactive criteria, the relay that received the source signal is the same as the one that is subsequently forwarding the signal towards the destination. A variation of those schemes have been proposed in \cite{IKKI_TCOM} where incremental best relay selection was studied. In that case, the direct source-destination link is available and the best relay among the available ones participates in the transmission only when the destination sends a negative acknowledgment.

With the adoption of buffer-aided relays, this coupling is broken, since different relays could be selected for transmission and reception, thus allowing increased degrees of freedom for scenarios where delay insensitive applications take place, or, when delay-aware relay selection is adopted. Buffering at the relay nodes is a promising solution for cooperative networks and motivates the investigation of new protocols and transmission schemes, even though it can result in delayed packet transmission making them inappropriate for some delay sensitive applications \cite{SCH_CM}. The first works that studied the benefits offered by buffer-aided relays, to the best of the authors' knowledge, are \cite{MEHTA, WANG}; the authors in \cite{MEHTA} investigated the performance of ``rateless'' codes in terms of throughput, end-to-end delay and queue stability when buffer-aided relay selection is employed, while \cite{WANG} presented an opportunistic buffered decode-wait-and-forward protocol which exploited mobile relays with buffers to increase the throughput and reduce the delay of the network. Subsequently, Ikhlef \emph{et al.} \cite{IKH2} proposed a novel criterion based on $\max-\max$ Relay Selection (MMRS), in which the relay with the best Source-Relay (SR) link is selected for reception and the relay with the best Relay-Destination (RD) link is selected for transmission. The MMRS implementation with buffers of finite size is additionally discussed. Another work that adopts MMRS is \cite{IKH3}, which aims to recover the half-duplex loss by adopting successive transmissions. As the proposed topology aims to mimic full-duplex relaying, different relays are selected at the same time slot. However, relays are considered isolated and the effect of inter-relay interference is ignored, while relay buffers are never considered to be either full or empty. Also, outage performance is examined for a fixed rate and the achieved capacity for adaptive rate. Krikidis \emph{et al.} \cite{KRICHAR} proposed the $\maxlink$ protocol, which allows all the SR and RD links to enter the competition for the best link through which a signal will be transmitted, thus providing additional freedom in the scheduled transmissions at each time slot. Adaptive link selection, but for merely a single relay setup, is proposed by Zlatanov \emph{et al.} in \cite{ZLAadapt}. The performance potentials of the buffer-aided relaying concept are further investigated in \cite{ZLA2},  where the authors show that half-duplex relaying with storage capabilities outperforms ideal full-duplex relaying.  
 
In the majority of the aforementioned works the main target is outage probability reduction or throughput improvement. The algorithm in \cite{NOMspringer} selects the end-to-end paths based on their achievable capacities and using hierarchical and adaptive modulation the power of the source and the transmitting relay are adjusted  to a target Signal-to-Noise Ratio (SNR). In various works (see, e.g., \cite{FENG, SHENG} and references therein) relays are assumed to be battery operated. Due to this practical assumption, the proposed algorithms select the best relays according to two factors. First, the selected relay should spend the minimum amount of energy to satisfy the outage threshold of the network. Second, by considering the remaining energy of each node, relays that were selected often are protected in order to increase the network lifetime. Optimal energy allocation techniques are presented in \cite{IKKI}, where the effects of co-channel interference at the relays and the destination are studied. A closed-form expression for the outage probability is derived and in addition, under different global and individual energy constraints, resource allocation algorithms are provided in a multi-relay network. Finally, the authors in \cite{CHEN_HUANG}, propose joint relay selection and subcarrier allocation in a multi-user network aiming at total transmission power minimization while still achieving a specific QoS requirement. 

This work proposes a buffer-aided successive opportunistic relaying protocol with Power Adaptation and IRI cancellation. Through the successive nature of this protocol, we aim to recover the half-duplex loss of cooperative relaying. In contrast to other works in the literature, not only we mitigate the detrimental effect of IRI through interference cancellation (IC) with the aid of buffers, if feasible, but we also allow for single link transmissions when IC is not feasible and the target spectral efficiency is not satisfied. More specifically, we propose the $\minpow$ relay selection policy that acts in conjunction with interference cancellation, whenever possible, and adjusts the power levels required accordingly to support the end-to-end communication. The $\minpow$ relay selection policy in terms of outage probability and diversity, is analyzed by modeling the evolution of the relay buffers as a Markov Chain (MC). The construction of the state transition matrix and the related steady state of the MC are studied; then, the derivation of the outage probability is presented. The contribution of this work is twofold.
\begin{itemize}
\item[(i)] Buffer-aided relays and interference cancellation are combined for the first time, thus decoupling the necessity of the receiving relay to transmit in the next time slot, even if the channel is in outage. Hence, an extra degree of freedom is obtained for choosing which relay to transmit, so that IRI cancellation is also achieved, if it is feasible. Moreover, the proposed scheme has a hybrid nature so, when successive transmission is not feasible, a single link transmission is incorporated in our protocol; this is equivalent to the $\maxlink$ protocol \cite{KRICHAR}. Also, the current channel state is the deciding factor in choosing a relay rather than the prediction of the next channel state, as proposed in some works in the literature (see, for example \cite{NOM}).
\item[(ii)] Power adjustment is also included in our scheme. In this way, the total energy expenditure in the network is minimized, as well as the inter-relay interference, thus reducing the outage probability of the network.
\end{itemize}
The $\minpow$ relay selection policy outperforms the current state-of-the-art schemes, with which it is compared. The outage, throughput and energy efficiency performance metrics considered are uniformly improved.

The structure of this paper is as follows. In Section~\ref{sec:model}, we present the system model while Section~\ref{sec:minpow} presents in detail the $\minpow$ relay selection policy proposed herein. Then, a model of this communication scheme and an outage probability analysis is performed in Section~\ref{sec:outage}, while illustrative examples and numerical results are provided in Sections \ref{sec:examples} and~\ref{sec:numerical}, respectively. Finally, conclusions are discussed in Section~\ref{sec:conclusions}.

\vspace{-.2cm}

%
%
\section{System model}\label{sec:model}

We assume a simple cooperative network consisting of one source $S$, one destination $D$ and a cluster $\mathcal{C}$ with $K$ Decode-and-Forward (DF) relays $R_k \in \mathcal{C}$ ($1 \leq k\leq K$). All nodes are characterized by the half-duplex constraint and therefore they cannot transmit and receive simultaneously. A direct link between the source and the destination does not exist and communication can be established only via relays \cite{BLE}. Each relay $R_k$ holds a buffer (data queue) $Q_k$ of length $L_k$ (number of data elements) where it can store source data that has been decoded at the relay and can be forwarded to the destination. The parameter $l_k \in \mathbb{Z}_{+}$, $l_k \in[0, L_k]$ denotes the number of data elements that are stored in buffer $Q_k$; at the beginning, each relay buffer is empty (i.e., $l_k =0$ for all $k$). For simplicity of exposition, we assume that $L_k=L, \forall ~ k\in\{1,2,\ldots, K\}$. 
We denote by $\mathcal{T}$ all the relays for which their buffer is not empty (i.e., $\mathcal{T} = \{R_k: l_k>0 \}$, $\mathcal{T} \subseteq \mathcal{C}$) and hence able to transmit to the destination, and by $\mathcal{A}$ all the relays for which their buffer is not full (i.e., $\mathcal{A} = \{R_k: l_k<L\}$, $\mathcal{A} \subseteq \mathcal{C}$) and they are available to receive a packet from the source. 

Time is considered to be slotted and at each time-slot the source $S$ and (possibly) one of the relays $R_k$ transmit with power $P_S$ and $P_{R_k}$, respectively. The source node is assumed to be saturated (it has always data to transmit) and the information rate is equal to $r_0$.  The retransmission process is based on an Acknowledgment/Negative-Acknowledgment (ACK/NACK) mechanism, in which short-length error-free packets are broadcasted by the receivers (either a relay $R_k$ or the destination $D$) over a separate narrow-band channel in order to inform the network of that packet's reception status. As the relays have buffering capabilities, it is highly probable that the relay selected for transmission will forward a packet received in a previous transmission phase other than the preceding. So, the destination could receive packets in a different order than the one that transmitted by the source and needs to handle them in such a way that information is correctly reconstructured. This can be easily achieved by including a sequence number in each packet, thus allowing the destination to put the packets in order.

All wireless links exhibit fading and we assume an Additive White Gaussian Noise (AWGN) channel. We assume frequency non-selective Rayleigh block fading according to a complex Gaussian distribution with zero mean and variance $\sigma^2_{ij}$ for the $i$ to $j$ link, which means it is constant during one time slot, but changes independently from one slot to another. For simplicity, the variance of the AWGN is assumed to be normalized with zero mean and unit variance. The channel gains are $g_{ij} \triangleq |h_{ij}|^2$ and exponentially distributed \cite[Appendix A]{DTSE}. In addition, we consider a clustered relay topology which offers equivalent average SNR in the SR and RD links ($\sigma^2_{SR_k}=\sigma^2_{R_kD}$). More specifically, the relays are positioned relatively close together based on location-based clustering and through a long-term routing process, variations due to pathloss and shadowing effects are tracked. This model is often assumed in the literature and its statistical analysis was described in \cite{KRI_par}. Also, the works that we compare our scheme with study networks with independent and identical distributed channels (see \cite{BLE, NOM, IKH2, KRICHAR, IKH3}. The power level chosen by the transmitter $i$ is denoted by $P_{i}$. $n_{j}$ denotes the variance of thermal noise at the receiver $j$, which is assumed to be AWGN. 

Since we implement successive relaying, we (may) have concurrent transmissions by the source and one relay taking place at the same time slot. This relaying strategy requires at least two relays to be employed, since one relay receives the source's frame while another relay is forwarding a previous frame to the destination, thus recovering the half-duplex loss of regular relays, as the destination receives one frame per transmission phase with the exception of the first phase. However, overlapping transmissions result in IRI and the source has to consider the interference power that the candidate relay for reception will receive by the transmitting relay. If successive transmission is infeasible, the transmission policy reduces to single link selection, where either the source or a relay transmits a packet, following a policy similar to \cite{KRICHAR}. 
\begin{figure}[h]
\vspace{-.2cm}
\centering
\includegraphics[width=0.45\columnwidth]{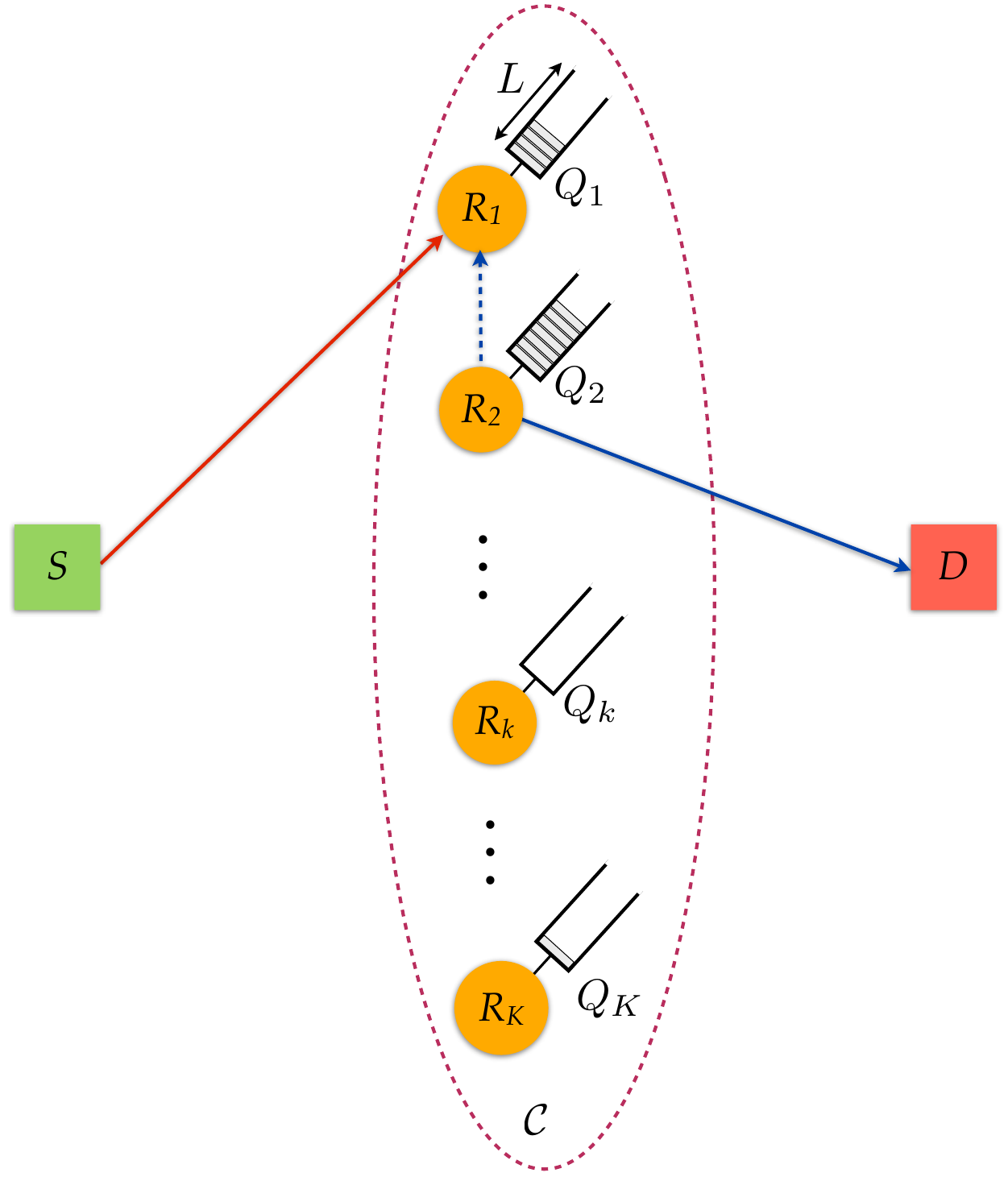}
\caption{The system model: Source $S$ communicates with Destination $D$ via a cluster of relays $R_k \in \mathcal{C}$, $k\in[1,K]$.}
\vspace{-.2cm}
\label{overflow}
\end{figure}

Successive relaying is combined with opportunistic relaying in order to select which pair of relays will assist the communication in each transmission phase. So, before the source starts transmitting a new frame, a relay pair that can fulfill a spectral-efficiency target, denoted by $r_0$, with the minimum power expenditure is chosen, since the power adaptation which is performed in accordance to the SINR requirements at each link, is equivalent to the minimization of the total power in the network subject to the SINR constraints (see, for example, \cite{FM1993, Themis}). 

The interference power at the $k^{th}$ receiver, denoted by $I_{k}$, in its general case, includes the interference from all the transmitters (belonging in a set $\mathcal{S}$ with cardinality $|S|$) in the network apart from the intended transmitter $i$ and the thermal noise, and is given by
\begin{align}\label{interference}
I_{k}(\mathbf{P})\triangleq\sum_{j\neq i, j \in \mathcal{S}}{g_{jk}P_{j} + n_{k}},
\end{align}
where $\mathbf{P}=[P_1 \ \ P_2 \ \ldots \ P_{|S|}]^T$. Therefore, the SINR at receiver $k$ is given by
\begin{align}\label{SINR1}
\Gamma_{k}(\mathbf{P})=\frac{g_{ik}P_{i}}{\sum_{j\neq i, j \in \mathcal{S}}{g_{jk}P_{j} + n_{k}}}\ .
\end{align}

In this work, a Gaussian input distribution and an information theoretic capacity achieving channel coding scheme is assumed, and as a result a target data rate can be easily expressed in terms of the SNR or SINR using Shannon's Theorem, i.e.,
\begin{align}
C=\log_2(1+\Gamma_{k}(\mathbf{P}))
\end{align}

Hence, the Quality of Service (QoS) can be measured in terms of SINR, i.e., independently of nodal distribution and traffic pattern, a transmission from transmitter to its corresponding receiver is successful (error-free) if the SINR of the receiver is greater or equal to the \textit{capture ratio} $\gamma_0$. The value of $\gamma_0$ depends on the modulation and coding characteristics of the radio, such as, the required data rate of the application which is supported by the network and the error-correction coding technique. Therefore, we require that
\begin{align}\label{SINR_geq}
\frac{g_{ik}P_{i}}{\sum_{j\neq i, j \in \mathcal{S}}{g_{jk}P_{j} + n_{k}}} \geq \gamma_0.
\end{align}
An outage occurs at the $k$-th receiver when $\Gamma_{k}(\mathbf{P}) < \gamma_0$, and we denote this outage probability by $\mathbb{P}(\Gamma_{k}(\mathbf{P}) < \gamma_0)$. This framework has been widely used (e.g., \cite{SAD}) and is equivalent to the \emph{capture model} introduced in \cite{HKL}.

Due to battery limitations, we assume that each transmitting node $i$ (source and relays) has a maximum peak power $P_{i}^{\max}$. So, even though we assume CSI to be known at the transmitter and the receiver, outage events emerge due to the power constraint $P_{i}^{\max}$ which might not be able to satisfy the target SNR value for correct signal reception. The power constraint is, in general, a critical design parameter in relay networks \cite{KGG}.

\subsection{CSI acquisition}

It is worth noting that our focus is to investigate the performance of a new buffer-aided successive opportunistic relay selection scheme under a global Channel State Information (CSI) assumption and hence, implementation issues  (i.e., distributed implementations \cite{BLE}, CSI acquisition, outdated CSI \cite{SCH_TCOM1} etc.) are beyond the scope of this work. In general, conventional centralized/distributed half-duplex relay selection protocols can be applied for the implementation of the proposed scheme (see, e.g., \cite{BLE1,NOS, SHAH}). However, some details about the realization of $\minpow$ are provided for both centralized and distributed approaches. 

First, in the centralized approach, we employ the source to select the relay-pair in each time-slot and keep track of the relays' buffer status. At the start of each transmission phase, the destination sends a training sequence to the relays and the RD CSI is acquired. Next, the relays take turns according to their indices, to inform the source about their RD channels. The use of training sequences for CSI acquisition is adopted by other related works, as is the case in \cite{IKH3}. As each relay transmits, the other $K-1$ relays listen to its transmission and determine their respective Relay-to-Relay (RR) link to it as pilot sequences are transmitted together with the RD status. The source receives these sequences and calculates the SR links' CSI. When all the relays have finished transmitting, the required information for centralized relay-pair selection has been gathered. Compared to \cite{IKH3} where IRI is considered negligible and relay selection is performed by the destination, an additional overhead comes from the acquisition of the RR links' CSI which are required in order for IRI to be canceled or avoided through the selection process. Nonetheless, regardless of whether IRI is negligible or not, each relay's RD link condition needs to be fed back to the source and as the other $K-1$ relays overhear this transmission, the RR channels are estimated. In this way, when their turn comes, RR CSI is made known to the source. As a result, in practice, our scheme does not introduce significant overhead compared to other state-of-the-art relay selection policies.

The distributed approach for the relay-pair selection process is based on the use of synchronized timers as proposed in \cite{BLE} for single relay selection. Recently, distributed joint-relay pair selection was presented in \cite{NOM_CAMAD} where the idea of Distributed Switch and Stay Combining (DSSC) \cite{DSSC1, DSSC2} was extended for the buffer-aided successive opportunistic relaying scenario. More specifically, each relay is first examined as a transmitting relay and through a pilot sequence, the other $K-1$ relays extract the RR links' CSI. As a result, each pair can determine the level of power minimization that it can  achieve and sets its timers to be inversely proportional to the required power for successful transmission. As all the relays take turns to transmit their pilot sequences, timer values are updated and at the end of this process, the best relay-pair timers have the minimum value and expire first. One advantage of the distributed implementation is that buffer size information is not required to be kept by the source, as is the case in the centralized approach, thus leading to significant processing reduction. Note that distributed relay-pair selection is further facilitated in networks where channel correlation is observed after each transmission phase. In this case, the relay-pair relationship relies on this correlation and does not change as frequently as in the case of networks where fading coefficients change after each transmission phase. For further CSI overhead reduction, DSSC approaches define a threshold which in our scheme could be the level of power minimization for a spectral efficiency target. In this way, the decision of switching a relay-pair for another would not be required in every transmission phase if this threshold is satisfied and the complexity is significantly reduced. Also, relay-pair selection could stop when the first pair which satisfies this threshold is found, so as to avoid the estimation of every link in the network. Here, the optimal pair is selected each time so, $K(K-1)$ RR links need to be estimated.

\vspace{-.2cm}

%
%
\section{The $\minpow$ relay selection policy}\label{sec:minpow}

In this section, we present a novel relay selection policy called $\minpow$. This relay selection scheme is associated with a one-slot cooperative protocol (similar to the $\maxlink$ relay selection policy \cite{KRICHAR} and the one in \cite{IKH3} where, however, IRI and power minimization are not considered), rather than two-slot protocols (as in \cite{IKH2}). At each time slot, the source $S$ attempts to transmit data to a selected relay with a non-full buffer (i.e., $R_r \in \mathcal{A}$), and at the same time another relay with a non-empty buffer (i.e., $R_t\in\mathcal{T}$, $R_t \neq R_r$) attempts to transmit data to the destination $D$.

In an arbitrary transmission phase, a packet is successfully transmitted from the transmitting relay $R_t$ to the destination $D$ if the SNR, denoted by $\text{SNR}_{R_t D}$, is greater than or equal to the capture ratio $\gamma_0$, i.e.,
\begin{align}\label{eq:SNR_RD}
\frac{g_{R_t D}P_{R_t}}{n_D} \geq  \gamma_0 \; , \quad R_t\in\mathcal{T}, R_t \neq R_r \; .
\end{align}
A packet is successfully transmitted from source  $S$ to the receiving relay $R_r$, if the Signal-to-Interference-and-Noise Ratio (SINR) at the receiving relay, denoted by $\text{SINR}_{S R_r}$ is greater than or equal to $\gamma_0$, i.e.,
\begin{align}\label{eq:SNR_SR}
\frac{g_{S R_r}P_{S}}{g_{R_t R_r} P_{R_t} \mathbb{I}(R_t , R_r)  +  n_{R_r}} \geq  \gamma_0 \; , \quad R_r\in\mathcal{C}, R_r \neq R_t \; ,
\end{align}
where $ \mathbb{I}(R_t , R_r) $ is a factor indicating whether interference cancellation is satisfied and it is described by
\begin{align}\label{eq:index}
\mathbb{I}(R_t , R_r)  = 
\begin{cases} 
0 , & \text{if } \displaystyle \frac{g_{R_t R_r} P_{R_t}}{g_{S R_r}P_{S} +  n_{R_r}} \geq  \gamma_0 \; ,   \\
1 , &\text{otherwise.}
\end{cases}
\end{align}

The following proposition states that if the maximum powers $P_{S}^{\max}$ and $P_{R_t}^{\max}$ are large enough (thus not imposing any limitations/constraints), for each pair of relays $R_r$ and $R_t$, then we can always find power levels such that interference cancellation conditions are satisfied. When interference cancellation is feasible, a fact expressed by $ \mathbb{I}(R_t , R_r) =0$, the interfering signal is firstly decoded and then subtracted at the relay prior to the decoding of the source signal. The strong interference regime was studied in \cite{SATO, COSTA} and more recently in \cite{FAN} and in this case the outage probability is not affected by the IRI.

\begin{proposition}\label{prop:1}
Let $P_{S}^{\max}=\infty$ and $P_{R_t}^{\max}=\infty$. For each pair of relays $R_r$ and $R_t$, there exist $P_{S}$ and $P_{R_t}$ such that $\mathbb{I}(R_t , R_r)  =0$, $\text{SNR}_{R_t D} \geq \gamma_0$ and  $\text{SINR}_{S R_r} \geq \gamma_0$. The minimum power levels $P_{S}^*$ and $P_{R_t}^*$ are achieved when $\text{SNR}_{R_t D} = \text{SINR}_{S R_r} = \gamma_0$, and are given by
\begin{subequations}\label{eq:P_optimal}
\begin{align}
P_{S}^* &= \frac{ \gamma_0 n_{R_r}}{g_{S R_r}} \; , \label{eq:P_S_optimal} \\
P_{R_t}^* &=\max \left\{ \frac{\gamma_0 n_D}{g_{R_t D}} , \frac{n_{R_r}\gamma_0 (\gamma_0 + 1)}{g_{R_t R_r}} \right\} \; \label{eq:P_Rt_optimal}.
\end{align}
\end{subequations}
\end{proposition}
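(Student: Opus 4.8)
The plan is to treat the three requirements as a system of inequalities in the two unknowns $P_S$ and $P_{R_t}$ (all channel gains $g_{ij}\in(0,1]$ are strictly positive, so every ratio below is well defined) and to extract from them the tightest possible lower bounds. First I would note that enforcing $\mathbb{I}(R_tR_r)=0$ collapses the denominator of the SINR in \eqref{eq:SNR_SR}, so that $\text{SINR}_{SR_r}\geq\gamma_0$ becomes simply $g_{SR_r}P_S\geq\gamma_0 n_{R_r}$, i.e. $P_S\geq\gamma_0 n_{R_r}/g_{SR_r}$. This bound involves $P_S$ alone, hence the smallest admissible source power is exactly $P_S^*$ of \eqref{eq:P_S_optimal}, attained when \eqref{eq:SNR_SR} holds with equality; since $P_S^{\max}=\infty$ this choice is always available.

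Next I would bound $P_{R_t}$, which enters two of the requirements. The relay--destination constraint \eqref{eq:SNR_RD} gives directly $P_{R_t}\geq\gamma_0 n_D/g_{R_tD}$. Rearranging the interference-cancellation condition in \eqref{eq:index} with $\mathbb{I}(R_tR_r)=0$ gives $g_{R_tR_r}P_{R_t}\geq\gamma_0\bigl(g_{SR_r}P_S+n_{R_r}\bigr)$. The key structural point is that the right-hand side is increasing in $P_S$: raising $P_S$ above $P_S^*$ tightens this constraint on $P_{R_t}$ while relaxing nothing, so it is optimal to keep $P_S=P_S^*$. Substituting $g_{SR_r}P_S^*=\gamma_0 n_{R_r}$ then yields $P_{R_t}\geq\gamma_0 n_{R_r}(\gamma_0+1)/g_{R_tR_r}$, and taking the larger of the two lower bounds on $P_{R_t}$ produces $P_{R_t}^*$ of \eqref{eq:P_Rt_optimal}.

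Finally I would close the loop by checking that $(P_S^*,P_{R_t}^*)$ really satisfies all three requirements, which confirms the necessary lower bounds are simultaneously attainable. With $\mathbb{I}(R_tR_r)=0$: $\text{SINR}_{SR_r}=\gamma_0$ by construction of $P_S^*$; $\text{SNR}_{R_tD}=g_{R_tD}P_{R_t}^*/n_D\geq\gamma_0$ because $P_{R_t}^*\geq\gamma_0 n_D/g_{R_tD}$; and the IC inequality $g_{R_tR_r}P_{R_t}^*\geq\gamma_0(g_{SR_r}P_S^*+n_{R_r})=\gamma_0 n_{R_r}(\gamma_0+1)$ holds because $P_{R_t}^*\geq\gamma_0 n_{R_r}(\gamma_0+1)/g_{R_tR_r}$. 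Since $P_{R_t}^{\max}=\infty$ the pair is admissible, so the desired powers exist; and since every feasible pair forces $P_S\geq P_S^*$ and therefore $P_{R_t}\geq P_{R_t}^*$, these are the componentwise minimum power levels.

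I do not anticipate a genuine obstacle, only one point that needs care: the coupling between $P_S$ and $P_{R_t}$ through the IC condition. The SINR requirement pushes $P_S$ upward, whereas the IC requirement effectively penalizes large $P_S$ (it forces $P_{R_t}$ upward), and one must argue explicitly that this tension is resolved by choosing $P_S$ as small as the SINR constraint permits --- which is precisely what makes the two powers in \eqref{eq:P_optimal} simultaneously minimizable rather than merely Pareto-optimal.
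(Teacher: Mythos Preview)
Your proposal is correct and follows essentially the same approach as the paper: reduce $\text{SINR}_{SR_r}\geq\gamma_0$ to a simple SNR bound once $\mathbb{I}(R_tR_r)=0$ is imposed, solve for $P_S^*$ with equality, substitute $g_{SR_r}P_S^*=\gamma_0 n_{R_r}$ into the IC inequality to obtain the second lower bound on $P_{R_t}$, and combine with the $R_tD$ bound via a $\max$. Your monotonicity argument (the IC right-hand side is increasing in $P_S$, so taking $P_S$ minimal is optimal for $P_{R_t}$) and your explicit feasibility check at $(P_S^*,P_{R_t}^*)$ are cleaner than the paper's somewhat terse justification, but the route is the same.
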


\begin{proof}
For IC to take place, according to \eqref{eq:index}, we have
\begin{align}\label{eq:index2}
{g_{R_t R_r} P_{R_t}} \geq \gamma_0({g_{S R_r}P_{S} +  n_{R_r}})
\end{align}
Given that $P_{R_t}$ is chosen such that \eqref{eq:index2} is fulfilled, then \eqref{eq:SNR_SR} becomes
$$
\frac{g_{S R_r}P_{S}}{n_{R_r}} \geq  \gamma_0\; ,
$$
and since $P_S$ decreases monotonically with $P_{R_t}$ (see equation~\eqref{eq:index} for $\mathbb{I}(R_t , R_r)  =0$), while it requires a minimum value such that equation~\eqref{eq:SNR_SR} is fulfilled, the minimum power of $S$ is given with equality; i.e.,
\begin{align}\label{eq:P_S_optimal1}
P_{S}^* = \frac{ \gamma_0 n_{R_r}}{g_{S R_r}} \; .
\end{align}
Substituting \eqref{eq:P_S_optimal1} into \eqref{eq:index2} we have
${g_{R_t R_r} P_{R_t}} \geq n_{R_r}\gamma_0 (\gamma_0 + 1)$.
Hence, 
the minimum $P_{R_t}$ is 
\begin{align*}
P_{R_t}^* =\max \left\{ \frac{\gamma_0 n_D}{g_{R_t D}} , \frac{n_{R_r}\gamma_0 (\gamma_0 + 1)}{g_{R_t R_r}} \right\} \; .
\end{align*}
\end{proof}

Proposition \ref{prop:1} provides the minimum power levels of $S$ and $R_t$, provided that their maximum power levels do not impose any constraint, and hence, the IC conditions are satisfied. In the next proposition, we find the optimal power levels of $S$ and $R_t$ when IC cannot take place.

\begin{proposition}\label{prop:2}
For each pair of relays $R_r$ and $R_t$, when interference cancellation is infeasible, the signal from $R_t$ can be decoded successfully at $D$ if and only if
\begin{align}
\frac{\gamma_0 n_D}{g_{R_t D}}&\leq  P_{R_t}^{\max} \; .\label{subeq:ineq1} 
\end{align}
In addition, the signal from $S$ can be decoded successfully at $R_r$ if and only if
\begin{align}
\gamma_0 \left(\frac{g_{R_t R_r}}{g_{S R_r}} P_{R_t}  +  \frac{n_{R_r}}{g_{S R_r}}\right) &\leq P_{S}^{\max} \; . \label{subeq:ineq2}
\end{align}
When \eqref{subeq:ineq1} and \eqref{subeq:ineq2} hold, the minimum power levels $P_{S}^\dagger $ and $P_{R_t}^\dagger $ are achieved when $\text{SNR}_{R_t D} = \text{SINR}_{S R_r} = \gamma_0$, and are given by
\begin{subequations}\label{eq:P_optimal2}
\begin{align}
P_{S}^\dagger &= \gamma_0 \left(\frac{g_{R_t R_r}}{g_{S R_r}} \frac{\gamma_0 n_D}{g_{R_t D}}  +  \frac{n_{R_r}}{g_{S R_r}}\right) \; , \label{eq:P_S_optimal2} \\
P_{R_t}^\dagger &=\frac{\gamma_0 n_D}{g_{R_t D}} \; \label{eq:P_Rt_optimal2}.
\end{align}
\end{subequations}
\end{proposition}

\begin{proof}
Interference cancellation cannot take place when the maximum power of $R_t$ is not high enough, such that its signal can be decoded by $R_r$, i.e.,
\begin{align}\label{eq:P_R_t_max}
 \frac{g_{R_t R_r} P_{R_t}^{\max}}{g_{S R_r}P_{S} +  n_{R_r}} <  \gamma_0 \; ,
\end{align}
but it should definitely be high enough to be decoded at the destination, i.e.,
\begin{align}
\frac{\gamma_0 n_D}{g_{R_t D}}\leq P_{R_t} &\leq  P_{R_t}^{\max} \label{subeq:ineq11}.
\end{align}
Then,  $P_S$ should be high enough, so that the transmitting signal can be successfully decoded at $R_r$. Hence,
\begin{align}
\gamma_0 \left(\frac{g_{R_t R_r}}{g_{S R_r}} P_{R_t}  +  \frac{n_{R_r}}{g_{S R_r}}\right) \leq P_{S} &\leq P_{S}^{\max}\;. \label{subeq:ineq21}
\end{align}
Substituting \eqref{subeq:ineq11} into \eqref{subeq:ineq21} we have 
\begin{align} \label{eq:condition}
\gamma_0 \left(\frac{g_{R_t R_r}}{g_{S R_r}} \frac{\gamma_0 n_D}{g_{R_t D}} +  \frac{n_{R_r}}{g_{S R_r}}\right) \leq P_{S} &\leq P_{S}^{\max} \; .
\end{align}
\end{proof}

\begin{coro}\label{feasibility check}
From equation~\eqref{eq:P_Rt_optimal} in Proposition~\ref{prop:1} it is obvious that IC can take place if and only if 
\begin{align}\label{eq:Pmax_condition}
P_{R_t}^{\max} \geq \max \left\{ \frac{\gamma_0 n_D}{g_{R_t D}} , \frac{n_{R_r}\gamma_0 (\gamma_0 + 1)}{g_{R_t R_r}} \right\} \;.
\end{align}
Hence, someone can check whether IC is feasible by checking the validity of inequality~\eqref{eq:Pmax_condition}. Hereafter, this is called the {IC feasibility check}.
\end{coro}

Making use of Propositions \ref{prop:1} and \ref{prop:2} and Corollary~\ref{feasibility check}, we now describe the $\minpow$ relay selection algorithm. It is assumed that at each time step the source collects the CSI information required. We denote by $\mathcal{P}$ the set of all possible relay-pairs in the relay network, and by $|\mathcal{P}|$ its cardinality. A relay pair, denoted by $(R_r,R_t)$ belongs to set $\mathcal{P}$ if and only if $r\neq t$, the receiving relay $R_r$ is not full and the transmitting relay $R_t$ is not empty (i.e., $R_r\in \mathcal{A}$ and $R_t\in \mathcal{T}$). 

The successive transmission part of the $\minpow$ algorithm is described as follows: \\
1. First, \emph{for each possible pair of relays} $(R_r,R_t)$, we carry out an \emph{IC feasibility check}, i.e., we check through \eqref{eq:Pmax_condition} if interference cancellation is feasible. \\
2. If IC is  \emph{feasible}, then
\begin{itemize}
\item[(i)] assuming IC did not take place, $P_{R_t}^\dagger $ and  $P_{S}^\dagger$ are as given in \eqref{eq:P_Rt_optimal2} and \eqref{eq:P_S_optimal2}, respectively. $P_{R_t}^\dagger \leq P_{R_t}^{\max}$ is satisfied by the feasibility check, but we need again to check separately whether $P_{S}^\dagger \leq P_{S}^{\max}$; 
\item[(ii)]  assuming IC took place, $P_{R_t}^*$ and  $P_{S}^*$ are as given in \eqref{eq:P_Rt_optimal} and \eqref{eq:P_S_optimal}, respectively. Note that since the feasibility criterion is satisfied, then $P_{R_t}^* \leq P_{R_t}^{\max}$. Nevertheless, we need to check separately whether $P_{S}^* \leq P_{S}^{\max}$; if it is not satisfied, then step 2(i) cannot be satisfied either. 
\item[(iii)] The minimum energy expenditure at a specific time slot for each pair is the minimum sum of the powers for the two cases, i.e., 
$\min \left\{ P_{S}^*+ P_{R_t}^*, P_{S}^\dagger + P_{R_t}^\dagger \right\}$ or any of the cases (i) and (ii) if the other case is not possible. Note that if IC can take place, this might require $P_{R_t}^* > P_{R_t}^\dagger$, such that $P_{S}^*+ P_{R_t}^* > P_{S}^\dagger + P_{R_t}^\dagger$. 
\end{itemize}
3. If IC is \emph{infeasible}, then we check whether $P_{R_t}^\dagger \leq P_{R_t}^{\max}$ and $P_{S}^\dagger \leq P_{S}^{\max}$ (as in step 2(i)). \\
4. We compare the minimum energy expenditure for all possible relay pairs $(R_r,R_t)\in \mathcal{P}$ and we choose the minimum among them. Note that, if there exist no relay pairs which can perform successive relaying, the proposed hybrid scheme selects a single link based on the $\maxlink$ approach.

\begin{remark}
Note that, while the single link relaying scheme is similar to $\maxlink$, our scheme includes power adaptation enabling the minimization of the power used for transmission. More specifically, the power adaptation which is performed in accordance to the SINR requirements at each link, is equivalent to the minimization of the total power in the network subject to the SINR constraints (see, for example, \cite{FM1993, Themis}). As a result, the selection of the best pair minimizes the energy spent to satisfy the SINR constraint.
\end{remark}
\begin{remark}
Note that in the worst case scenario (in which all the queues are neither empty nor full), there will be $|\mathcal{P}|=K\times (K-1)$ combinations. Hence,  the worst case complexity of the problem is $\mathcal{O}(K^2)$.
\end{remark}
\begin{remark}
We note that in the case of independent non-identical distributed (i.n.i.d.) channels, our scheme may experience difficulties due to the two-hop asymmetry. However, this holds for all the buffer-aided schemes that we compare our scheme with, as relay buffers will tend to get full or empty faster, depending on the hop which has the the best channel conditions. This case could be efficiently handled if transmission rate is kept fixed for both hops as is the case here, or by prioritizing the selection of the pair which achieves the more balanced two-hop transmission rates, thus leading to buffer stabilization in the long-term. In this way, relays will be available to be selected either for transmission or reception and diversity will be assured.
\end{remark}

\vspace{-.2cm}

%
%
\section{Outage probability model and analysis}\label{sec:outage}

In this section, the outage probability behavior of the $\minpow$ relay selection scheme follows the theoretical framework of \cite{KRICHAR}, which is also a relay network with finite buffers.  The main differentiation compared to \cite{KRICHAR} is that we have additional ways of transmission through successive relaying; in other words, we have additional links from a certain buffer state to others. Note that the possibility of having a successive transmission requires two links to offer a SINR at the receivers equal to or above $\gamma_0$ at the same time, otherwise transmission is based on single-link selection. So, an outage takes place when $\gamma_0$ can not be achieved even in a single-link transmission; thus, $\mathbb{P}_{\rm out}\triangleq \mathbb{P}(\Gamma_{k}(\mathbf{P}) < \gamma_0)$.

\subsection{Construction of the state transition matrix of the Markov Chain (MC)}\label{sec:outageA}
 
We first formulate the state transition matrix of the MC, denoted as $\mathbf{A}$, $\mathbf{A}\in \mathbb{R}^{{\left( {L + 1} \right)^K} \times {\left( {L + 1} \right)^K}}$. More specifically, 
${{\bf{A}}_{i,j}} = \mathbb{P}\left( {{s_i} \to {s_j}} \right) = \mathbb{P}\left( {{X_{t + 1}} = {s_j}|{X_t} = {s_i}} \right)$
are the transition probabilities to move from a buffer state $s_i$ to a state $s_j$. The transition probability depends on the number of relays that are available for cooperation. 

\begin{remark}
As we consider finite buffers, relays that have full buffers cannot compete in the selection of the best relay that will receive the source's signal. Also, relays with empty buffers are not able to transmit and as a result they are excluded from the best transmitting relay selection. Moreover, when there is no possibility of transmitting successively through two selected relays our system reduces to the $\maxlink$ relay selection scheme. The number of links that are available in this mode is reduced if the relays have either full or empty buffers.
\end{remark}

In what follows, we distinguish the outage events at each communication link. The outage event $A$ denotes the case of experiencing an outage in the $SR$ link when either IC is possible or not. Let $A_1$ denote the event that IC is impossible at relay $R_i$,
\begin{align*}
A_1=\left\{ \frac{g_{S R_i}P_S}{g_{R_t R_i}P_{R_t}+n_{R_i}}<\gamma_0 \right \}\;.
\end{align*}
When IC is possible then the event is denoted by $A_2$ and it is given by
\begin{align*}
A_2=\left\{ \frac{g_{S R_i}P_S}{n_{R_i}}<\gamma_0 \right \}\;.
\end{align*}
Events $A_1$ and $A_2$ are mutually exclusive in the sense that the transmission that takes place has either IC (event $A_2$) or not (event $A_1$);  hence, $A=A_1\cup A_2$ and $A_1\cap A_2 = \emptyset$. Equivalently, for the $RD$ link, an outage event occurs when
\begin{align*}
B=\left\{\frac{g_{R_t D}P_{R_t}}{n_{D}}<\gamma_0 \right \} \; .
\end{align*}
\begin{remark}\label{remark:exp}
Since the channel gains ($g_{ij} = |h_{ij}|^2$) are exponentially distributed, the event that the instantaneous Signal-to-Noise Ratio (SNR) is smaller than the capture ratio (i.e., $SNR< \gamma_0$) is exponentially distributed for fixed power levels (see, for example, \cite{SAD}). As a result, an outage event on the $RD$ link -- given that the relays are power-limited (i.e., $P_{R_t}\leq P_{R_t}^{\max}$) -- is exponentially distributed. Also, if there is no relay transmitting to the destination, there is no interference and an outage event on the $SR$ link is also exponentially distributed, using similar arguments.
\end{remark}
In this simplifying case for which all links are i.i.d. and symmetric, we are in outage when neither successive nor non-successive communication can take place. This is equivalent to saying that all possible links are in outage. This can be simply expressed as
\begin{align}
{\bar p}_{ij} 
 =  p(A)^{|\mathcal{A}_i|} p(B)^{|\mathcal{T}_i|} \; .
\end{align}
Despite the fact that we assume symmetric i.i.d. links, since the sets for events $A$ and $B$ may be different, we did not use the same notation for $p(A)$ and $p(B)$. Nevertheless, by letting $p \triangleq p(A)=p(B)$ (since the links are symmetric i.i.d.) and by assuming all relays are neither full nor empty (i.e., $|\mathcal{T}_i|=|\mathcal{A}_i|=K$), the outage probability is given by ${\bar p}_{ij} =p^{2K}$, i.e., the probability that all $2K$ possible links are in outage.
\begin{remark}
The outage probability of $\minpow$ is bounded by the outage probability of single-link transmission \cite{KRICHAR}, since the probability of having an outage when searching for a successive transmission is larger due to the requirement of simultaneously not having two links in outage. On the contrary, in single-link transmission, the same links are examined and if one of them fulfils the threshold for error-free transmission then the outage event is avoided.
\end{remark}

\subsection{Steady state distribution and outage probability}\label{steady_st}

Since we have defined the entries of the transition matrix $\mathbf{A}$, the next step is to find the steady state distribution of the MC. In this way, the relationship among the different ways of leaving and reaching specific buffer states will be defined.

\begin{proposition}\label{prop:3}
The state transition matrix $\mathbf{A}$ is Stochastic, Indecomposable\footnote{A stochastic matrix $P\in \mathbb{R}^{m\times m}$ is said to be decomposable if there exists a nonempty proper subset $\mathcal{S} \subset \{v_1,v_2, \ldots, v_m \}$ such that $p_{ji}=p_{ij}=0$ whenever $v_i\in \mathcal{S}$ and $v_j \notin \mathcal{S}$. $P$ is indecomposable if it is not decomposable.} and Aperiodic\footnote{In a finite state Markov Chain, a state $i$ is aperiodic if there exists $k$ such that for all $k' \geq k$, the probability of being at state $i$ after $k'$ steps is greater than zero; otherwise, the state is said to be periodic. A stochastic matrix is aperiodic if every state of the Markov chain it describes is aperiodic.} (SIA).
\end{proposition}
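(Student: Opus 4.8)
The plan is to verify the three properties (Stochastic, Indecomposable, Aperiodic) separately, each following from the structure of the transition probabilities constructed in Section~\ref{sec:outageA} and from the physical description of the buffer dynamics.

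\textbf{Stochasticity.} First I would show that each row of $\mathbf{A}$ sums to one. Fix a state $s_i$. The outgoing transitions partition into $n_i$ single-link transitions, $N_i-n_i$ successive transitions, and the two self-type ``outage'' events (single-link outage with probability $\bar p^{\rm ns}_{ij}$ and successive outage with probability $\bar p^{\rm s}_{ij}$) which return the chain to $s_i$ (no buffer changes on an outage). Summing the $n_i$ terms of \eqref{eq:single_tr} gives $1-(1-\exp(-(2^{2r_0}-1)/P))^{D_{\rm ns}}$, and adding \eqref{eq:single_out} restores the complementary mass; the same telescoping works for \eqref{eq:succ_tr} and \eqref{eq:succ_out}. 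One must be a little careful about how the total probability mass is split between the single-link mode and the successive mode (the $n_i/N_i$ and $(N_i-n_i)/N_i$ weights appear precisely for this), but once the bookkeeping is set up the row sum is $1$ and all entries are manifestly nonnegative, so $\mathbf{A}$ is stochastic.

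\textbf{Indecomposability.} Here I would argue that the chain is irreducible on its state space, i.e.\ every buffer state is reachable from every other with positive probability, which is exactly the negation of the decomposability condition in the footnote. The key observations: (a) from any state in which some relay is non-full, there is a positive-probability single-link transition that increments that relay's buffer by one (the source-relay link succeeds); (b) from any state in which some relay is non-empty, there is a positive-probability single-link transition that decrements that relay's buffer by one (the relay-destination link succeeds); and all these success probabilities are strictly positive because $\exp(-(2^{2r_0}-1)/P)>0$ for finite $P$, and, for the successive case, $P_S^*,P_{R_t}^*$ are finite whenever the relevant channel gains are positive. Chaining such $\pm 1$ moves, one can drive the buffer occupancy vector $(l_1,\dots,l_K)$ from any point of $[0,L]^K$ to any other, so the unique communicating class is the whole state space; hence no nonempty proper subset $\mathcal{S}$ can be ``closed'' in the sense required for decomposability.

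\textbf{Aperiodicity.} This is the easiest part and I would dispatch it by exhibiting a self-loop with positive probability. For any state $s_i$, the single-link outage event has probability $\bar p^{\rm ns}_{ii}=\frac{n_i}{N_i}(1-\exp(-(2^{2r_0}-1)/P))^{D_{\rm ns}}>0$ (a strictly positive number since the base lies in $(0,1)$ and $n_i\ge 1$), and an outage leaves all buffers unchanged, i.e.\ returns the chain to $s_i$. Thus every state has period $1$, so $\mathbf{A}$ is aperiodic. Combining the three parts gives the SIA property.

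The main obstacle is the indecomposability step: one has to be careful that the chain really is irreducible and has no absorbing or transient sub-structure at the ``corners'' of the state space (all-empty or all-full buffers) or when buffers are only partially available. The argument hinges on the fact that at every non-extreme state at least one incrementing and at least one decrementing single-link transition is available with positive probability, and that from the extreme states one can still take the one available move; I would make this explicit by describing a concrete path between two arbitrary states, moving one buffer coordinate at a time. Stochasticity is routine telescoping of the geometric-type sums, and aperiodicity follows immediately from the existence of the outage self-loop, so neither of those should require more than a couple of lines.
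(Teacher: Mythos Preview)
Your proposal is correct and follows essentially the same three-part structure as the paper's proof: row-stochasticity from the transition-probability construction, indecomposability from the fact that all buffer states communicate, and aperiodicity from the strictly positive outage self-loops on the diagonal. You are considerably more explicit than the paper (which dispatches each point in a sentence or two), particularly in spelling out the $\pm 1$ buffer-coordinate path for irreducibility and the telescoping for stochasticity, but the underlying ideas are identical.
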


\begin{proof}
In order to prove that $\mathbf{A}$ is SIA, we have to show that it is (i) row stochastic, (ii) indecomposable, and (iii) aperiodic. \\
\noindent (i) \emph{Row Stochasticity}: For any MC the transition from state $s_i$ to a state $s_j$ for all possible states $s_i$ sums up to $1$, i.e.,
\begin{equation}\label{eq:row_stoch}
\sum\limits_{i = 1}^{{{\left( {L + 1} \right)}^K}} {{{\bf{A}}_{i,j}}}  = 1.
\end{equation}
In the MC that models the buffer states of the relays, if there exists a transition from state $s_i$ to $s_j$ then there exists a transition from $s_j$ to $s_i$. This fact applies to our scheme where  transitions due to successive relaying, offer additional links from one state to another and vice versa. However, since the states are not symmetric and the number of links to other states is not the same, the transition probabilities are not the same. Thus, the transition matrix $\mathbf{A}$ is not symmetric. As a result the $\mathbf{A}$ is row stochastic, but not necessarily doubly stochastic. \\
\noindent (ii) \emph{Indecomposability}: Due to the structure of the problem all the possible states of the considered MC can communicate and hence its state space is a single communicating class; in other words, it is possible to get to any state from any state. Hence, the MC is indecomposable. \\
\noindent (iii) \emph{Aperiodicity}: Aperiodicity of a MC is easily established due to the fact that the diagonal entries that correspond to original (non-virtual) nodes are nonzero. All links receive nonnegative weights and the diagonals (outage probabilities) are strictly positive. Also, the probability of being at any state after $M$ and $M +1$ transitions is greater than zero; hence, all states are aperiodic and therefore, the MC is aperiodic.
\end{proof} 

\begin{lemma}\label{lemma:1}
The stationary distribution of the row stochastic matrix \textnormal{\textbf{A}} of the MC that models the buffer states is given by
$\bm{\pi}=(\mathbf{A}-\mathbf{I}+\mathbf{B})^{-1}\bf{b}$, where $\bm{\pi}$ is the stationary distribution, $\bf{b}=(1~1~\ldots~1)^T$ and $\mathbf{B}_{i,j}=1, ~\forall i,j$.
\end{lemma}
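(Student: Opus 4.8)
The plan is to turn the two conditions that pin down the stationary distribution --- the fixed-point equation $\mathbf{A}\bm{\pi}=\bm{\pi}$ and the normalization $\mathbf{b}^T\bm{\pi}=1$ --- into a single square linear system with coefficient matrix $\mathbf{A}-\mathbf{I}+\mathbf{B}$, and then to argue that this matrix is invertible so that the system has a unique solution. I would begin by recording what Proposition~\ref{prop:3} buys us: because $\mathbf{A}$ is stochastic, indecomposable and aperiodic, standard Perron--Frobenius theory for finite Markov chains tells us that $1$ is a simple eigenvalue of $\mathbf{A}$, so $\ker(\mathbf{A}-\mathbf{I})$ is one-dimensional and spanned by a strictly positive vector; after scaling it to sum to one we obtain the unique stationary distribution $\bm{\pi}$. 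I would also note that by \eqref{eq:row_stoch} the columns of $\mathbf{A}$ sum to one, i.e.\ $\mathbf{b}^T\mathbf{A}=\mathbf{b}^T$, equivalently $\mathbf{b}^T(\mathbf{A}-\mathbf{I})=\mathbf{0}$.

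Next I would use $\mathbf{B}=\mathbf{b}\mathbf{b}^T$: since $\mathbf{b}^T\bm{\pi}=1$ we have $\mathbf{B}\bm{\pi}=\mathbf{b}(\mathbf{b}^T\bm{\pi})=\mathbf{b}$, so adding this to $(\mathbf{A}-\mathbf{I})\bm{\pi}=\mathbf{0}$ gives $(\mathbf{A}-\mathbf{I}+\mathbf{B})\bm{\pi}=\mathbf{b}$. Thus $\bm{\pi}$ is a solution of the system; once invertibility of $\mathbf{M}:=\mathbf{A}-\mathbf{I}+\mathbf{B}$ is established, it is the unique one, and left-multiplying by $\mathbf{M}^{-1}$ yields the claimed formula.

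The main obstacle is proving $\mathbf{M}$ nonsingular, which I would do by showing $\ker\mathbf{M}=\{\mathbf{0}\}$. Suppose $\mathbf{M}\mathbf{x}=\mathbf{0}$. Multiplying on the left by $\mathbf{b}^T$ and using $\mathbf{b}^T(\mathbf{A}-\mathbf{I})=\mathbf{0}$ leaves $0=\mathbf{b}^T\mathbf{B}\mathbf{x}=(\mathbf{b}^T\mathbf{b})(\mathbf{b}^T\mathbf{x})=(L+1)^K(\mathbf{b}^T\mathbf{x})$, hence $\mathbf{b}^T\mathbf{x}=0$ and therefore $\mathbf{B}\mathbf{x}=\mathbf{0}$; the equation then reduces to $(\mathbf{A}-\mathbf{I})\mathbf{x}=\mathbf{0}$, so $\mathbf{x}\in\ker(\mathbf{A}-\mathbf{I})=\operatorname{span}\{\bm{\pi}\}$, i.e.\ $\mathbf{x}=c\bm{\pi}$. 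Combining $\mathbf{x}=c\bm{\pi}$ with $\mathbf{b}^T\mathbf{x}=0$ and $\mathbf{b}^T\bm{\pi}=1$ forces $c=0$, so $\mathbf{x}=\mathbf{0}$ and $\mathbf{M}$ is invertible. The whole argument leans on Proposition~\ref{prop:3} only through the simplicity of the eigenvalue $1$ (a consequence of indecomposability); aperiodicity is not needed for this lemma itself but is what later makes $\bm{\pi}$ the limit of $\mathbf{A}^n$ used in the outage computation.
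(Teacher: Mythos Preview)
Your proof is correct and follows essentially the same route as the paper's: combine $\mathbf{A}\bm{\pi}=\bm{\pi}$ with $\mathbf{B}\bm{\pi}=\mathbf{b}$ into a single square system and then establish invertibility of $\mathbf{A}-\mathbf{I}+\mathbf{B}$ by showing its kernel is trivial. The only cosmetic difference is that you work with the right null space while the paper works with the left null space; if anything, your kernel argument is more carefully closed, since the paper leaps directly from $\mathbf{y}\mathbf{b}=0$ to $\mathbf{y}=\mathbf{0}$ without supplying the intermediate steps (deducing $\mathbf{y}\mathbf{B}=\mathbf{0}$, then $\mathbf{y}(\mathbf{A}-\mathbf{I})=\mathbf{0}$, and invoking simplicity of the eigenvalue $1$) that you make explicit.
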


\begin{proof}
Similar to that of \cite[Lemma~1, Lemma~2]{KRICHAR}.
\end{proof}

\subsection{Derivation of the outage probability}

We assume that we have outage when both the $SR$ and $RD$ links are in outage. In this case, no packet is moved around and hence the system remains at the same state. Using the steady state of the MC and the fact that an outage event occurs when there is no change in the buffer
status\footnote{Note that $\minpow$ first checks if a successive transmission is possible and if it fails due to one or both hops being in outage, then it operates as a single-link selection scheme. If single-link selection fails then there is no change in the buffer status and the system is in outage.}, the outage probability of the system can be expressed as \cite{KRICHAR},
\begin{equation}\label{eq:p_out}
{P_{out}} = \sum\limits_{i = 1}^{{{\left( {L + 1} \right)}^K}} {{{\bm{\pi }}_i}{{\bar p}_{ij}} = {\rm{diag}}\left( {\bf{A}} \right){\bm{\pi }}} \; .
\end{equation}
By constructing the state transition matrix \textbf{A} that captures in its diagonal the probabilities where no change in buffer states happened, and the corresponding steady state probabilities, we can easily compute the outage probability of the system. 

\vspace{-.2cm}

%
%
\section{Illustrative Examples}\label{sec:examples}

In the previous section, we have described the theoretical framework for the computation of the outage probability. In what follows, we will present two illustrative examples that showcase the behavior of our approach for different parameters. The first example consists of two relays $(K=2)$ with finite buffer size equal to two $(L=2)$, while the second one examines the case of infinite buffers \((L\rightarrow\infty)\) at the relays.

\subsection{Illustrative example of K=2 relays with buffer size L=2}\label{sec:examples1}

This illustrative example showcases the behavior of our approach for different parameters. Since we have a scheme that employs successive transmissions the simplest case is when two relays are available. Assuming that each relay has a buffer size equal to two, we show its state transition diagram in Fig.~\ref{overflow}, with the nine possible states for the buffers of the two relays.

\begin{figure}[ht]
\vspace{-.2cm}
\centering
\includegraphics[width=0.65\columnwidth]{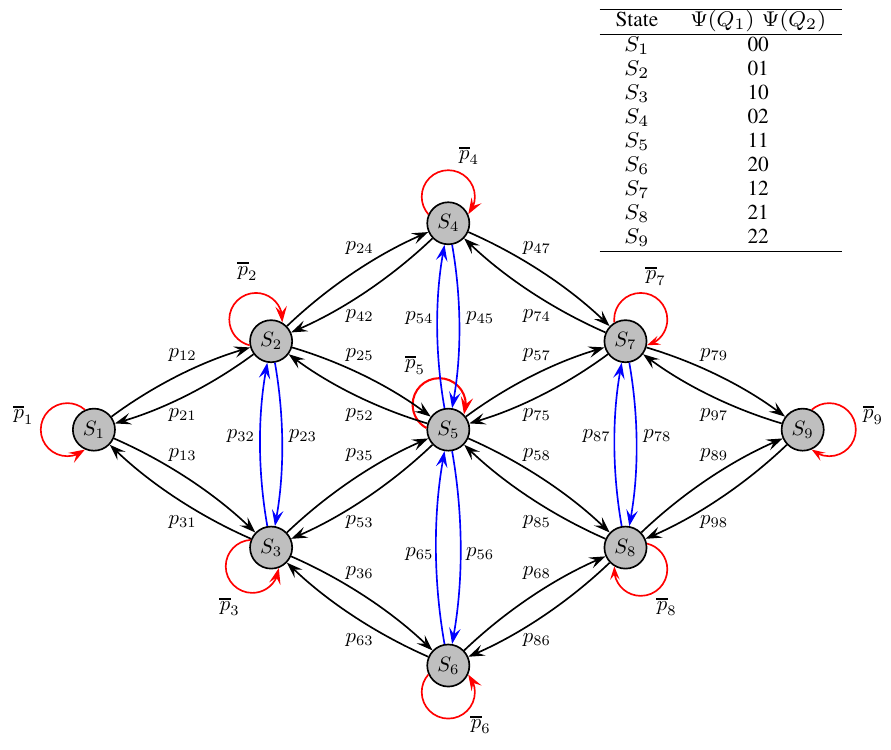}
\vspace{-0.3cm}
\caption{State diagram of the Markov chain representing the states of the buffers and the transitions between them for a case with $K=2
$ and $L=2$. Compared to the $\maxlink$ scheme in \cite{KRICHAR}, the $\minpow$ model includes extra transition states due to the successive nature of the protocol.}
\vspace{-.2cm}
\label{overflow}
\end{figure}

The steady state of the system for different values of SNR can be found by using the method described in \ref{steady_st} and the corresponding state transition matrix \textbf{A} is as follows.
\begin{align*}
{\small
\mathbf{A} =  \left(\begin{array}{ccccccccc} 
   \bar{p}_{1} & p_{12} & p_{13} & {0}        & {0}        & {0}        & {0}        & {0}        & {0}  \\
   p_{21} & \bar{p}_{2} & p_{23} & p_{24} & p_{25} & {0}        & {0}        & {0}        & {0}  \\
   p_{31} & p_{32} & \bar{p}_{3} & {0}        & p_{35} & p_{36} & {0}        & {0}        & {0}  \\
   0           & p_{42} & {0}        & \bar{p}_{4} & p_{45} & {0}        & p_{47} & {0}        & {0}  \\
   0           & p_{52} & p_{53} & p_{54} & \bar{p}_{5} & p_{56} & p_{57} & p_{58} & {0}  \\
   0           & {0}        & p_{63} & {0}        & p_{65} & \bar{p}_{6} & {0}        & p_{66} & {0}  \\
   0           & {0}        & {0}        & p_{74} & p_{75} & {0}        & \bar{p}_{7} & p_{78} & p_{79}  \\
   0           & {0}        & {0}        & {0}        & p_{85} & p_{86} & p_{87} & \bar{p}_{8} & p_{89}  \\
   0           & {0}        & {0}        & {0}        & {0}        & {0}        & p_{97} & p_{98} & \bar{p}_{9} 
\end{array} \right) \; .}
\end{align*}

\begin{table}[h!]
\centering
\caption{Buffer states for $K=2$ relays and $L=2$ buffer size}
{\small
\begin{tabular}{c c c c c}
\hline
Buffer State & $\bm{\pi}$ ($P=0$dB) & $\bm{\pi}$ ($P=10$dB) & $\bm{\pi}$ ($P=20$dB) & $\bm{\pi}$ ($P=30$dB) \\
\hline
$S_1$ & 0.0413774 & 0.0005918 & 0.0000055 & 0\\
$S_2$ & 0.1264042 & 0.1021872 & 0.1001714 & 0.05115971\\
$S_3$ & 0.1263465 & 0.1021954 & 0.1001701 & 0.05115971\\
$S_4$ & 0.1145455 & 0.1481283 & 0.1498316 & 0.19884035\\
$S_5$ & 0.1826525 & 0.2937944 & 0.2996428 & 0.39768046\\
$S_6$ & 0.1145455 & 0.1481283 & 0.1498316 & 0.19884035\\
$S_7$ & 0.1263465 & 0.1021954 & 0.1001701 & 0.05115971\\
$S_8$ & 0.1264042 & 0.1021872 & 0.1001714 & 0.05115971\\
$S_9$ & 0.0413774 & 0.0005918 & 0.0000055 & 0\\
\hline
\end{tabular}}
\label{K2L2_pi}
\end{table}

In Table~\ref{K2L2_pi} we observe that when SNR increases, the steady state distribution decreases in buffer states $S_1$ and $S_9$, and eventually it becomes practically zero. That means that for high SNR, the probability of outage tends to zero. On the other hand, when SNR increases, the steady state distribution in state $S_5$ increases, which dominates the states for large SNRs.

In the numerical results (Section~\ref{sec:numerical}), we evaluate the theoretical framework for the case of $K=2$, $L=2$ in order to examine its  behavior compared to obtained numerical results from simulations.

\subsection{Example of buffer size $L\rightarrow\infty$ and transmission power $P\rightarrow\infty$}\label{sec:examples2}

The asymptotic analysis yields interesting results due to the nature of the problem. In \cite{KRICHAR} it was shown that as the buffer size approaches infinity, the states where no full or empty relays exist are dominant\footnote{By dominant we mean the states for which the probability of being at that state, when steady state is reached, is greater than zero.}. 
As described in Remark \ref{remark:exp} of Section~\ref{sec:outageA}, the outage probability of the network is exponentially distributed and is written analytically as
\begin{align}\label{eq:Pout}
{\bar p_{ij}} = {\left( {1 - \exp \left( { - \frac{\gamma_0}{P}} \right)} \right)^{2K}} \; .
\end{align}
Below the derivation of the diversity order of the proposed $\minpow$ scheme is presented.
\begin{lemma}\label{lemma:3}
The diversity order\footnote{The diversity order (or diversity gain), denoted herein by $d$, is the gain in spatial diversity, used to improve the reliability of a link and it is defined as follows: $d = -\lim_{\textrm{SNR} \rightarrow \infty} \frac{\log \mathbb{P}_{\rm out}(\textrm{SNR})}{\log \textrm{SNR}}$. } of $\minpow$ is equal to $2K$. 
\end{lemma}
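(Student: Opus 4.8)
The plan is to read off the diversity order from the high-SNR behaviour of the outage expression already derived for the $L\to\infty$ regime, after checking that the successive-transmission contribution cannot lower it. First I would recall, following \cite{KRICHAR} and the discussion preceding \eqref{eq:Pout}, that as $L\to\infty$ the stationary distribution $\bm{\pi}$ concentrates on the dominant states (no relay buffer empty or full); since by Remark~\ref{remark:1} every dominant state has the same counts $N_i=K(K+1)$, $n_i=2K$, $D_{\rm ns}=2K$ and $D_{\rm s}=K(K-1)$, the expression \eqref{eq:p_out} for $\mathbb{P}_{\rm out}$ reduces, up to a constant independent of $\mathrm{SNR}$, to the diagonal entry of $\mathbf{A}$ attached to a dominant state, namely $\bar p=\bar p^{\rm ns}+\bar p^{\rm s}$ with $\bar p^{\rm ns}$ given by \eqref{eq:Pout} and $\bar p^{\rm s}$ given by \eqref{eq:succ_out} with these counts.

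Next I would expand both terms for large $\mathrm{SNR}$. Using $1-\exp(-x)=x+o(x)$ as $x\to0$, the single-link term obeys $\bar p^{\rm ns}=\Theta\!\big(\mathrm{SNR}^{-2K}\big)$ directly from \eqref{eq:Pout}. For the successive term, in the high-SNR regime IC is always feasible, so each of the two per-hop factors in the bracket of \eqref{eq:succ_out} is $\Theta(\mathrm{SNR}^{-1})$ (the standard Rayleigh single-link scaling); the bracket is therefore $\Theta(\mathrm{SNR}^{-2})$ and $\bar p^{\rm s}=\Theta\!\big(\mathrm{SNR}^{-2K(K-1)}\big)$. Since the scheme needs $K\ge2$ relays, $2K(K-1)\ge 2K$, so $\bar p^{\rm s}$ decays at least as fast as $\bar p^{\rm ns}$ and $\mathbb{P}_{\rm out}=\Theta\!\big(\mathrm{SNR}^{-2K}\big)$ (for $K=2$ the two terms are of the same order $\mathrm{SNR}^{-4}$, which does not change the conclusion).

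Finally, substituting $\mathbb{P}_{\rm out}=\Theta(\mathrm{SNR}^{-2K})$ into the definition $d=-\lim_{\mathrm{SNR}\to\infty}\frac{\log\mathbb{P}_{\rm out}(\mathrm{SNR})}{\log\mathrm{SNR}}$ yields $d=2K$.

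I expect the delicate points to be the two scaling claims rather than the final limit: transporting the $L\to\infty$ concentration on dominant states from \cite{KRICHAR} to the present model (which only adds successive transitions while keeping the state space a single communicating class, so that argument carries over unchanged), and justifying $\bar p^{\rm s}=\Theta(\mathrm{SNR}^{-2K(K-1)})$. For the latter, the key observation is that a successive end-to-end path fails only if at least one of its two hops fails, so its outage probability is a product of two $\Theta(\mathrm{SNR}^{-1})$ quantities, and \eqref{eq:succ_out} raises this to the power $D_{\rm s}=K(K-1)$ precisely because an outage at a dominant state requires all $D_{\rm s}$ competing successive paths to be in outage simultaneously.
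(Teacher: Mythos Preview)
Your proposal is correct and follows essentially the same route as the paper: reduce to a dominant state, identify the single-link contribution $\bar p^{\rm ns}$ as the one governing the high-SNR decay, and read off $d=2K$ from $1-e^{-x}\approx x$. The only cosmetic difference is that the paper disposes of the successive term by a containment argument (if all $2K$ individual links are in outage then every successive pair is necessarily in outage, so the outage event coincides with the single-link outage event), whereas you compute $\bar p^{\rm s}=\Theta(\mathrm{SNR}^{-2K(K-1)})$ explicitly and compare exponents; both lead to the same conclusion.
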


\begin{proof}
The diversity order is derived using its definition and equation \eqref{eq:Pout}, i.e.,  
\begin{align*}
d &= -\lim_{P \rightarrow \infty} \frac{\log \mathbb{P}_{\rm out}(P)}{\log P} 
=-\lim_{P \rightarrow \infty} \frac{\log {\left( {1 - \exp \left( { - \frac{{\gamma_0}}{P}} \right)} \right)^{2K}}}{\log P}
\stackrel{(a)}{\approx}-\lim_{P \rightarrow \infty} \frac{\log {\left(\frac{\gamma_0}{P} \right)^{2K}}}{\log P} \\
&=-\lim_{P \rightarrow \infty} \frac{2K\log\left( \frac{\gamma_0}{P} \right)}{\log P} =-2K\lim_{P \rightarrow \infty} \frac{\log \left(\gamma_0\right)}{\log P} + 2K\lim_{P \rightarrow \infty}  \frac{\log P}{\log P}
=2K \; .
\end{align*}
Note that approximation $(a)$ emerges from the fact that if $x \rightarrow 0$, then $1-e^{-x} \approx x$. 
\end{proof}

For finite buffer sizes, possible gains in outage probability compared to \cite{KRICHAR} derive from better interconnection between the buffer states. Buffer states where relays are full or empty are often avoided and the states where increased diversity is offered, are more usual.

\vspace{-.2cm}

%
%
\section{Numerical Results}\label{sec:numerical}

In line with the previous discussion, we have developed a simulation setup for the $\minpow$ scheme to evaluate its performance with a spectral efficiency target $r_0=1$ bps/Hz, in terms of: $1)$ outage probability, $2)$ average throughput, $3)$ power reduction and $4)$ average delay. The $\minpow$ scheme proposed herein is compared to best-relay selection (BRS) \cite{BLE}, successive opportunistic relaying (SOR) \cite{NOM}, hybrid relay selection ($\max-\max$) \cite {IKH2} and $\maxlink$ selection \cite{KRICHAR}. In addition, we provide a Selection Bound corresponding to the case presented in \cite{IKH3} where inter-relay interference is ignored and additionally all links are always  available for selection, i.e., buffers are neither full nor empty. Also, the Selection Bound scheme is coupled with single-link transmissions when successive transmissions fail, in order to provide a fair comparison with $\minpow$. As a result, the Selection Bound is an upper bound not only for $\minpow$ which allows successive transmissions, but also for the rest of the schemes which are included in the comparisons, due to its single-link selection capability.

\subsection{Outage Probability}

Fig.~\ref{pout_K2L2}, illustrates the outage probability results. Each scheme employs $K=2$ relays with buffer size $L=2$. The selection policy that offers the worst performance is SOR. The lack of buffers prohibits the combination with a more robust scheme, such as the $\maxlink$ and IRI degrades the outage performance. $\max-\min$ shows better behavior as IRI is not present. Furthermore, $\max-\max$ offers about 1.5 dB improvement due to the use of buffers, over BRS. Even better results are achieved by $\maxlink$ as outage performance is improved by almost 4 dB due to the flexibility in the link selection. As $\maxlink$ is a part of $\minpow$ when successive transmissions are not possible, we observe similar results between these two schemes. For $K=2$ and $L=2$, $\minpow$ exhibits a 0.5 dB gain for high SNR. The increased interconnection between buffer states guarantees that states $S_{1}$ (00) and $S_{9}$ (22) offering the least diversity, are more often avoided, compared to the $\maxlink$ scheme. Also, the theoretical curve of the outage probability matches the simulation results validating the analysis in Section~\ref{sec:outage}.

\begin{figure}[h]
\vspace{-0.4cm}
\centering
\includegraphics[width=8cm,height=6.7cm]{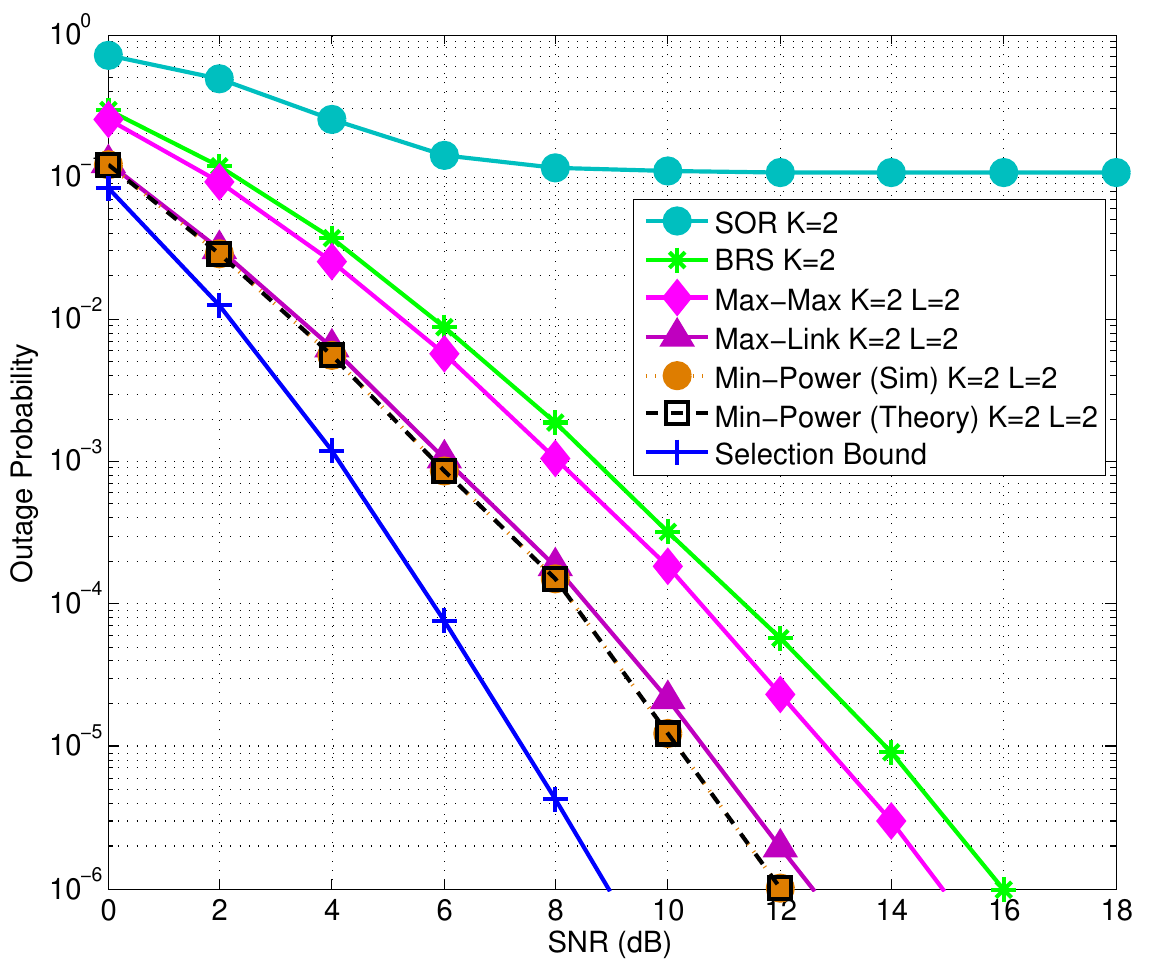}
\vspace{-0.45cm}
\caption{Outage probability for increasing transmit SNR for $K=L=2$. A 0.5 dB gain is introduced by $\minpow$  over $\maxlink$ due to increased interconnection between buffer states.}
\vspace{-.6cm}
\label{pout_K2L2}
\end{figure}


In Fig.~\ref{pout_K2Lv}, we depict the outage probability performance for increasing transmit SNR and varying $L$. As $L$ increases, the curves become steeper, thus indicating the increase in diversity as more links are available for selection. For $L=100$ and $L=\infty$, the outage curves almost match but still retain a gap from the Selection Bound case. This is reasonable as IRI degrades the performance of $\minpow$, while for the Selection Bound we assumed that the relays are isolated and inter-relay interference is negligible. Still, the $\minpow$ curve follows closely the Selection Bound especially for high SNR as very strong interference increases the probability of interference cancellation.
\begin{figure}[h]
\vspace{-0.4cm}
\centering
\includegraphics[width=8cm,height=6.7cm]{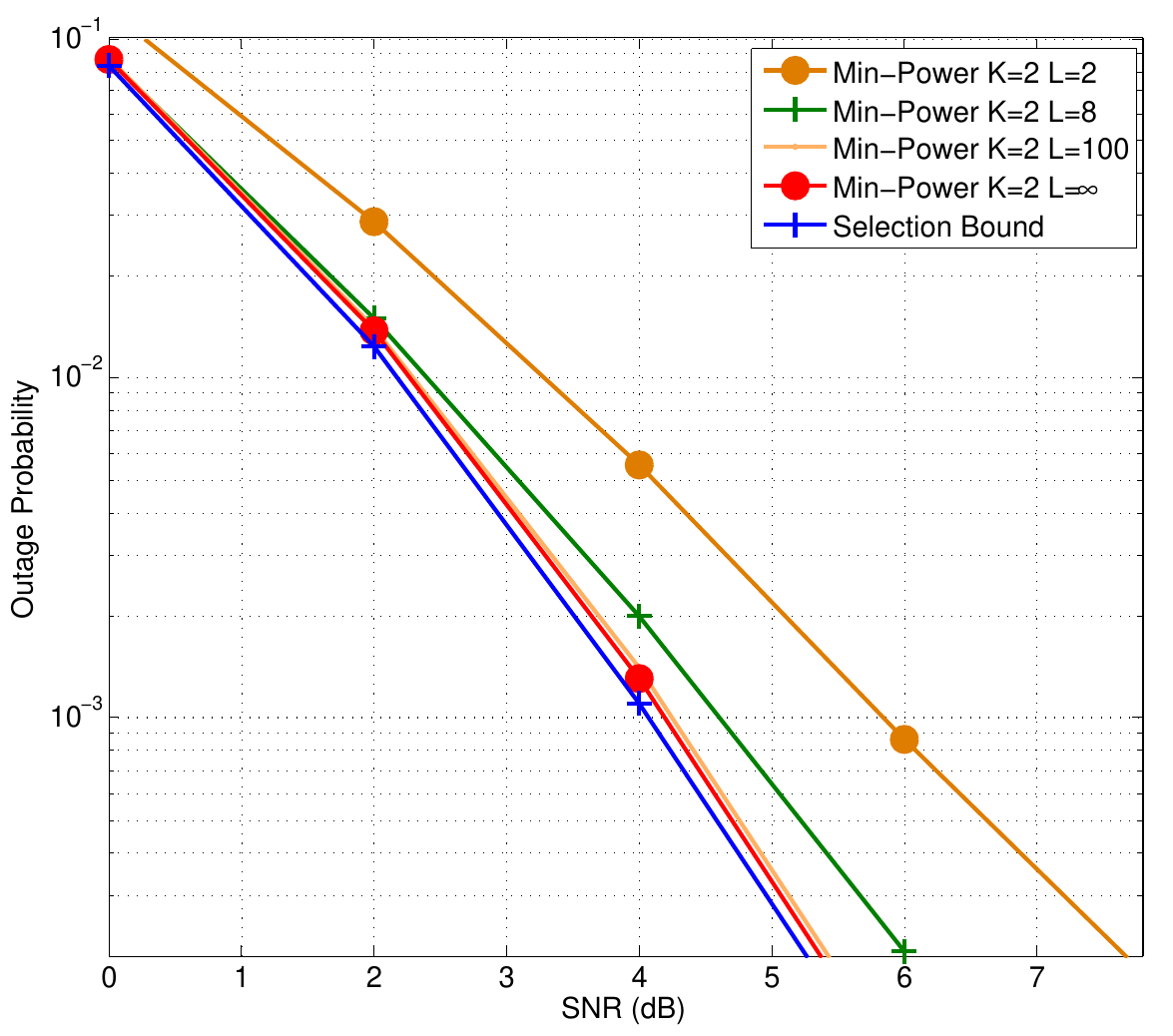}
\vspace{-0.45cm}
\caption{Outage probability for increasing transmit SNR for $K=2$ and varying $L$. For increased $L$, the curves become steeper thus indicating increased diversity as more links are available for selection.}
\vspace{-.6cm}
\label{pout_K2Lv}
\end{figure}


Next, Fig.~\ref{pout_KvL4}, shows the results for varying $K$ while $L=4$. As we saw in the asymptotic analysis example, diversity increases with a rate twice the number of $K$. Here we have a fixed finite buffer size but still each relay addition obviously improves the achieved diversity of the network. This derives from the fact that the possibility of inter-relay interference cancellation increases with an order equal to $K(K-1)$ as we have more relay pairs to select from. It is interesting to note, whenever $K$ increases to three from two the improvement is bigger than the gain introduced by adding one more relay when $K=3$ as $r_0$ is fulfilled even for low $K$ values.
\begin{figure}[h]
\vspace{-0.4cm}
\centering
\includegraphics[width=8cm,height=6.7cm]{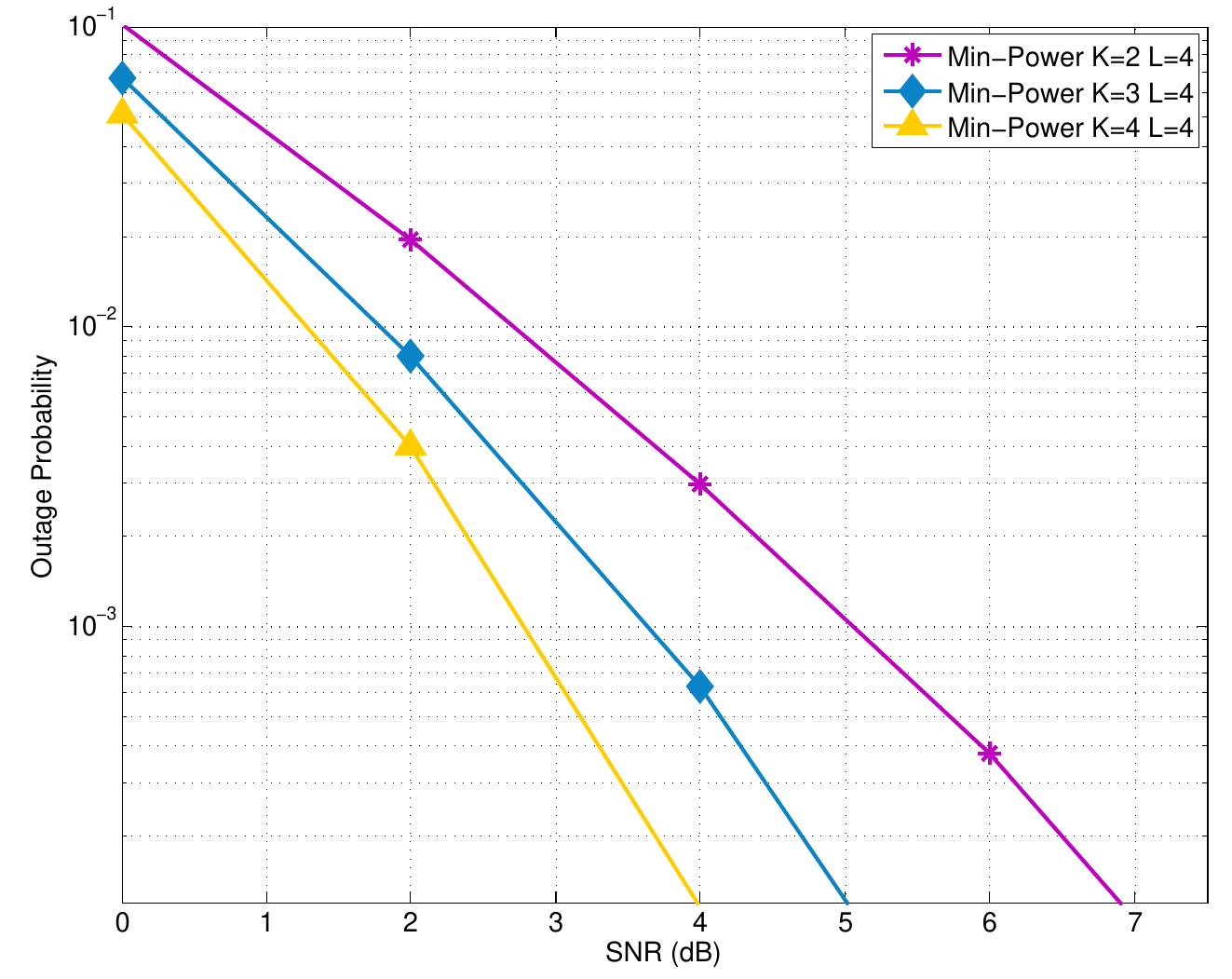}
\vspace{-0.45cm}
\caption{Outage probability for increasing transmit SNR for varying $K$ and $L=4$. Each relay addition improves  the diversity of the network as we have more relay pairs to select from, thus increasing the chances for IRI cancellation.}
\vspace{-.6cm}
\label{pout_KvL4}
\end{figure}


\subsection{Average Throughput}

For the second set of comparisons among relay selection schemes, we present in Fig.~\ref{thr_K2L2} the average throughput performance, measured in bps/Hz.   First, we see that the compared selection policies are divided in two groups. The first one consists of the half-duplex schemes, namely BRS, $\max-\max$ and $\maxlink$. Due to the constant transmission rate, equal to 1 bps/Hz, these schemes can achieve a maximum average throughput of 0.5 bps/Hz. In line with the outage probability performance, $\maxlink$ outperforms BRS and $\max-\max$ and reaches the upper bound nearly 2.5 dB prior to the others. In the second group we have SOR and $\minpow$. These schemes aim to lift the half-duplex constraint and increase the average throughput through successive transmissions. It is observed that the schemes of the second group reach the upper-limit more slowly than those of the first group. This is explained by the fact that in the low SNR regime, single-link transmissions are often performed, thus reducing the throughput by one-half in these cases. Also, in $\minpow$'s case, IRI degrades its performance and thus, introduces a delay in reaching its upper bound contrary to the IRI-free HD schemes. Min-power achieves the best performance reaching 1 bps/Hz for high SNR. SOR however, does not reach the upper bound even for high SNR as IRI causes many outages. 
\begin{figure}[h]
\vspace{-0.4cm}
\centering
\includegraphics[width=8cm,height=6.7cm]{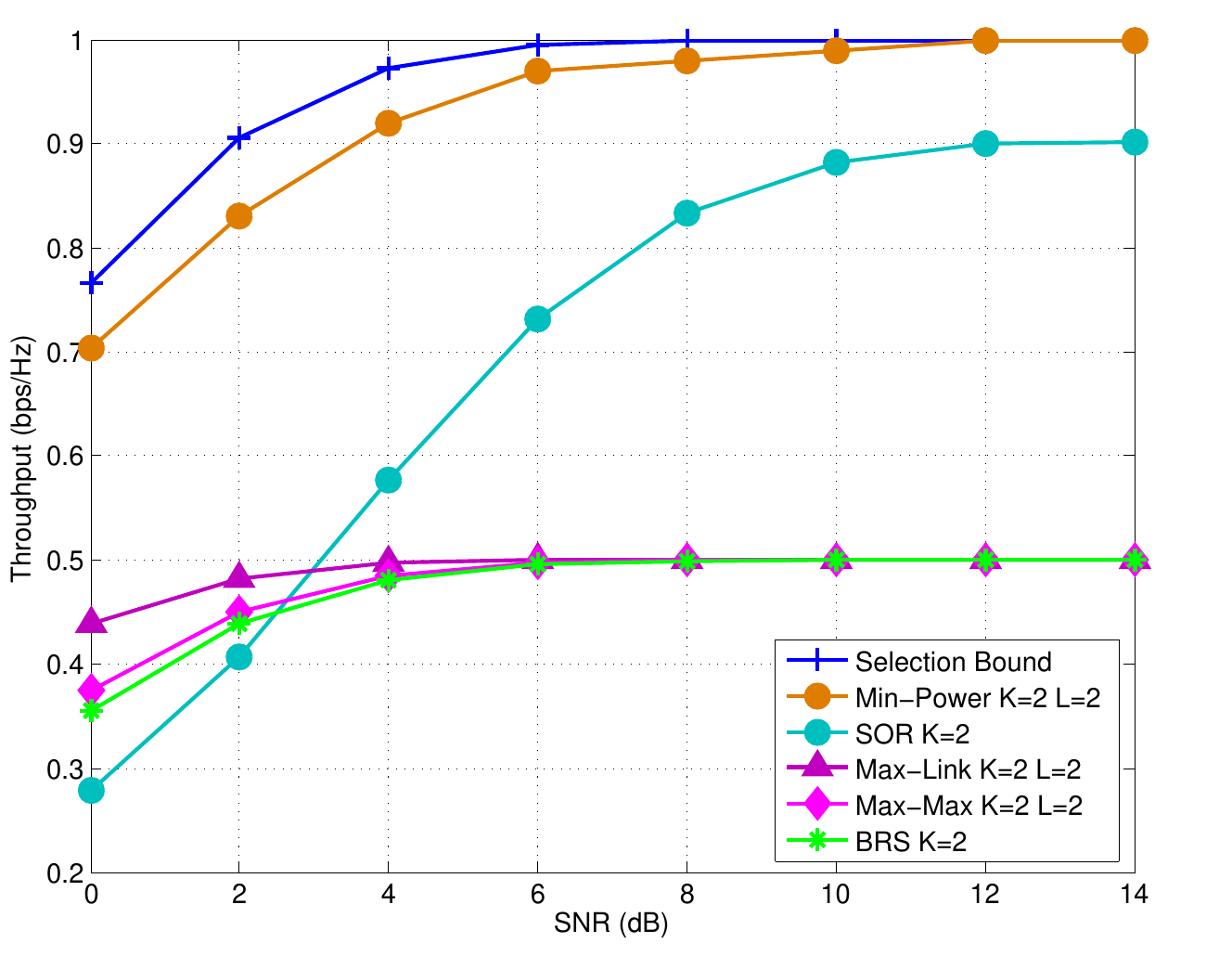}
\vspace{-0.48cm}
\caption{Average throughput for increasing transmit SNR for $K=L=2$. The upper-limit of $\minpow$ is more slowly reached than the half-duplex schemes since in the low SNR regime, single-link transmissions are more often performed.}
\vspace{-.6cm}
\label{thr_K2L2}
\end{figure}


Fig.~\ref{thr_K2Lv} illustrates average throughput for $\minpow$ as $L$ and transmission power increase. From the results we see that $\minpow$ for buffer-size above eight, follows the Selection Bound and their performance gap becomes negligible at about 8 dB. It is important to note that when the SNR is low, interference cancellation does not take place often and the proposed scheme chooses half-duplex transmissions instead of successive ones. This explains the gap between the Selection Bound and the cases of $L=100, \infty$ in the low SNR regime. Note that in this comparison we have $K=2$ relays and there is no flexibility in pair selection when a successive transmission is performed.
\begin{figure}[h]
\vspace{-0.4cm}
\centering
\includegraphics[width=8cm,height=6.7cm]{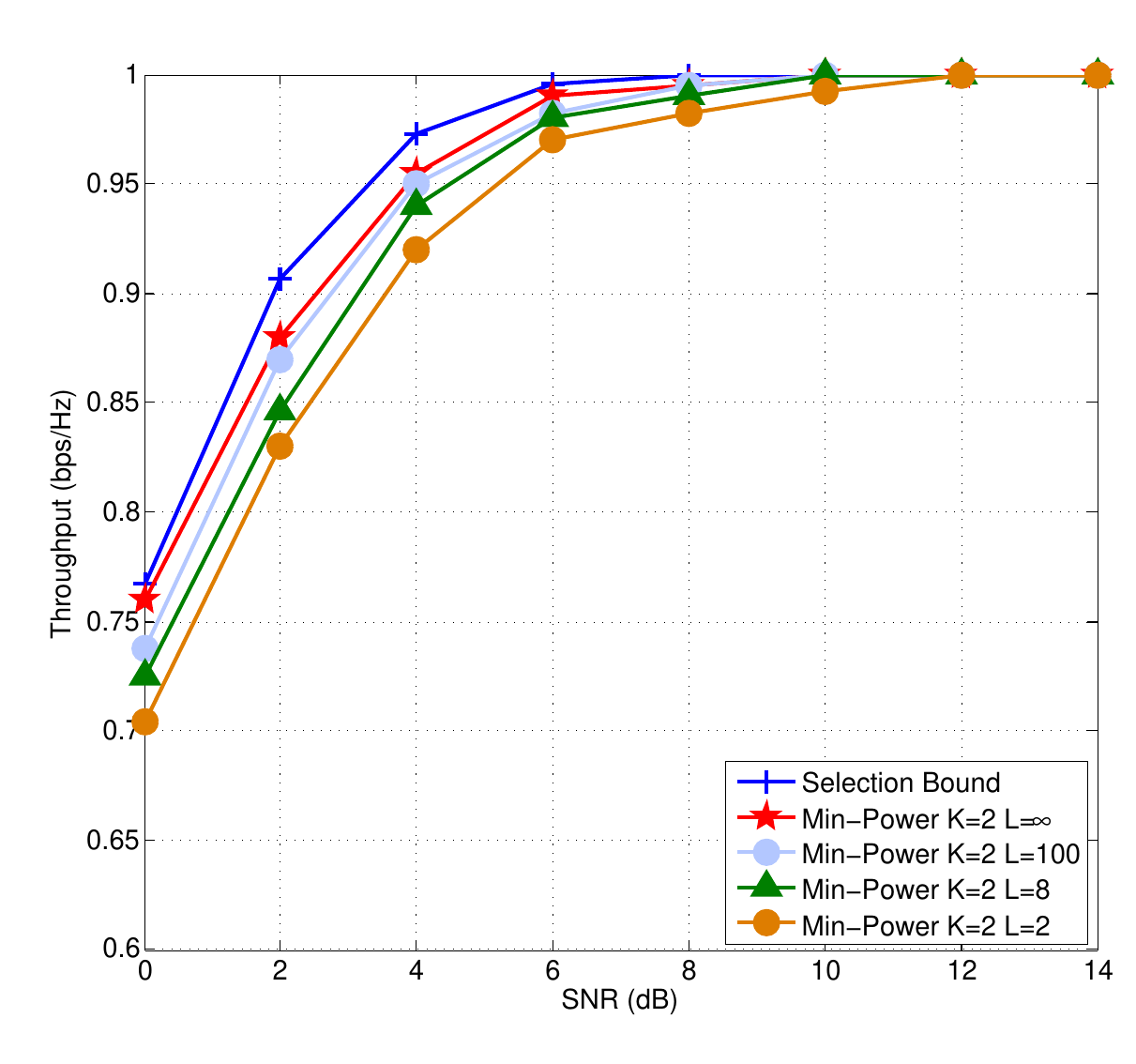}
\vspace{-0.48cm}
\caption{Average throughput for increasing transmit SNR for $K=2$ and varying $L$. For buffer-size above eight, $\minpow$ follows the Selection Bound and their gap becomes negligible at about 8 dB.}
\vspace{-.6cm}
\label{thr_K2Lv}
\end{figure}

The third parameter that we examine is the number of relays in the cluster. Fig.~\ref{thr_KvL4} shows the gain in average throughput as both $K$ and the transmission power increase. From the analysis of the $\minpow$ relay selection scheme, for each transmission we check $K(K-1)$ pairs to see whether or not interference cancellation can be performed. As we add more relays, the number of possible relay pairs increases from 2 in the case of $K=2$ to 6 in the case of $K=3$, while for $K=4$ we have 12 possible pairs. As a result, even for low SNR, successive transmissions are more possible for increasing $K$ and throughput tends to reach its maximum value more rapidly.
\begin{figure}[h]
\vspace{-0.2cm}
\centering
\includegraphics[width=8cm,height=6.7cm]{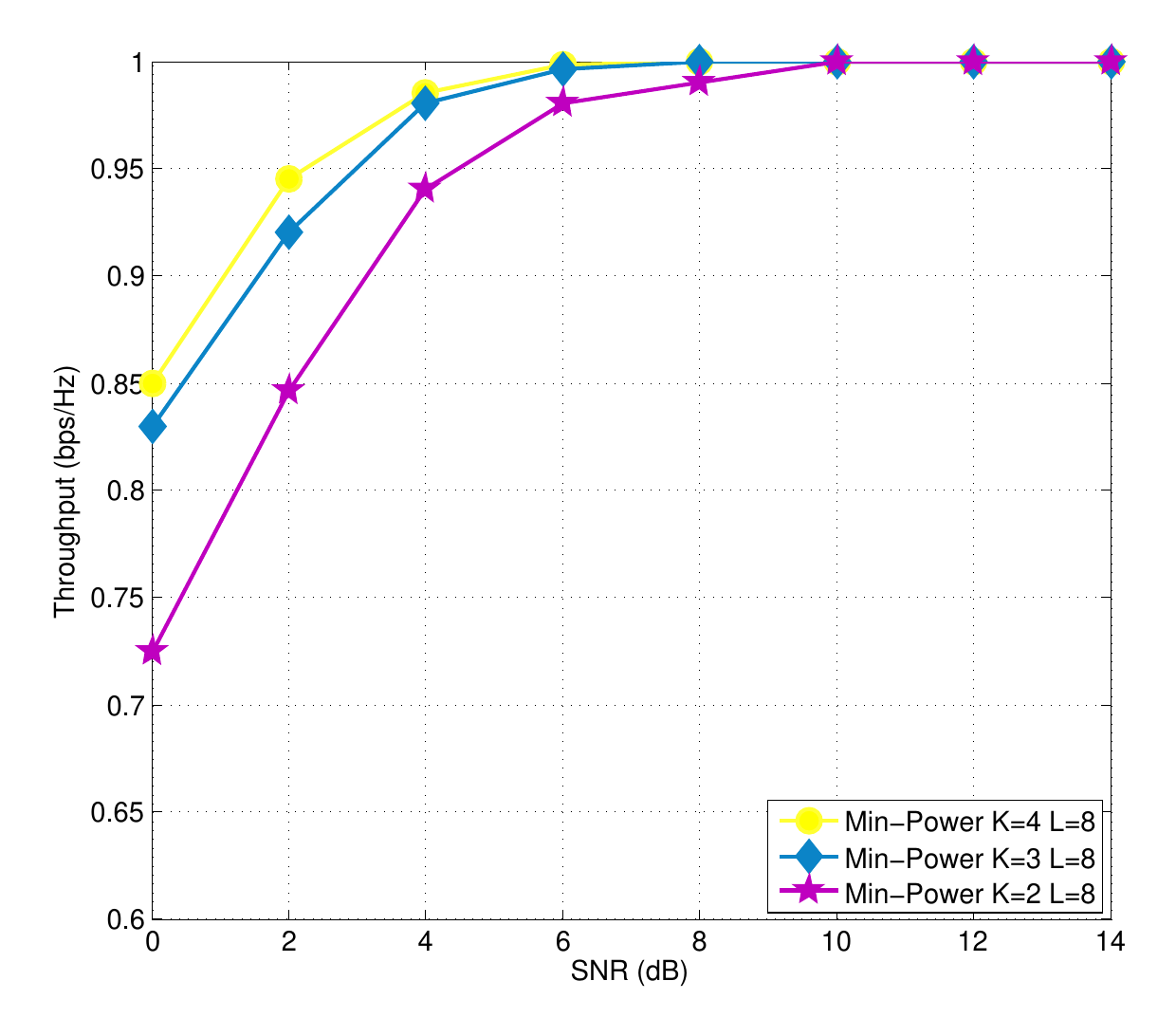}
\vspace{-0.48cm}
\caption{Average throughput for increasing transmit SNR for varying $K$ and $L=8$. As more relays are added, the possible relay increase and their number is equal to $K(K-1)$.}
\vspace{-.6cm}
\label{thr_KvL4}
\end{figure}


\subsection{Power Reduction}

The third metric that we study is the power reduction achieved by $\minpow$. The scheme used as a reference to compare the power gain is a buffer-aided relay selection scheme that does not employ power minimization in the selection process and transmits each time with the maximum available power. For example, if the fixed power scheme uses 6 dB to perform a transmission and $\minpow$ reduces this requirement to 2 dB we keep the difference of 4 dB and we calculate the average after the end of all transmissions for this step of maximum power value. As stated in Section~\ref{sec:minpow}, when $\minpow$ relay selection is used, a search for the pair of relays that requires the minimum sum of transmission powers starts for a given SNR threshold. Similarly, for the case of single-link transmission, the link requiring the minimum power to  achieve a successful transmission is selected. We note that when a successive transmission is feasible both the source and a relay transmit, resulting in increased power reduction margin as the most appropriate relay pair is selected. On the contrary, when a single-link transmission occurs, we calculate power reduction by comparing the power used by the transmitting node in the fixed power scheme to the power used by the transmitting node in the selected single-link. 

The first parameter that influences power reduction performance is $L$. Fig.~\ref{pow_K2Lv}, contains the curves for $K=2$ and various buffer sizes. Here the limiting factor is inter-relay interference. We see that for increased values of $L$ greater than 8, differences in power are minor but still, the Selection Bound is not met. This comes as a consequence of the difficulty to cancel IRI since only two relays are employed in the transmission. Note that the value on y-axis refers to the \emph{differential power gain}, i.e., the power in dB that is saved when the proposed scheme is employed. 

\begin{figure}[h]
\vspace{-0.4cm}
\centering
\includegraphics[width=8cm,height=6.7cm]{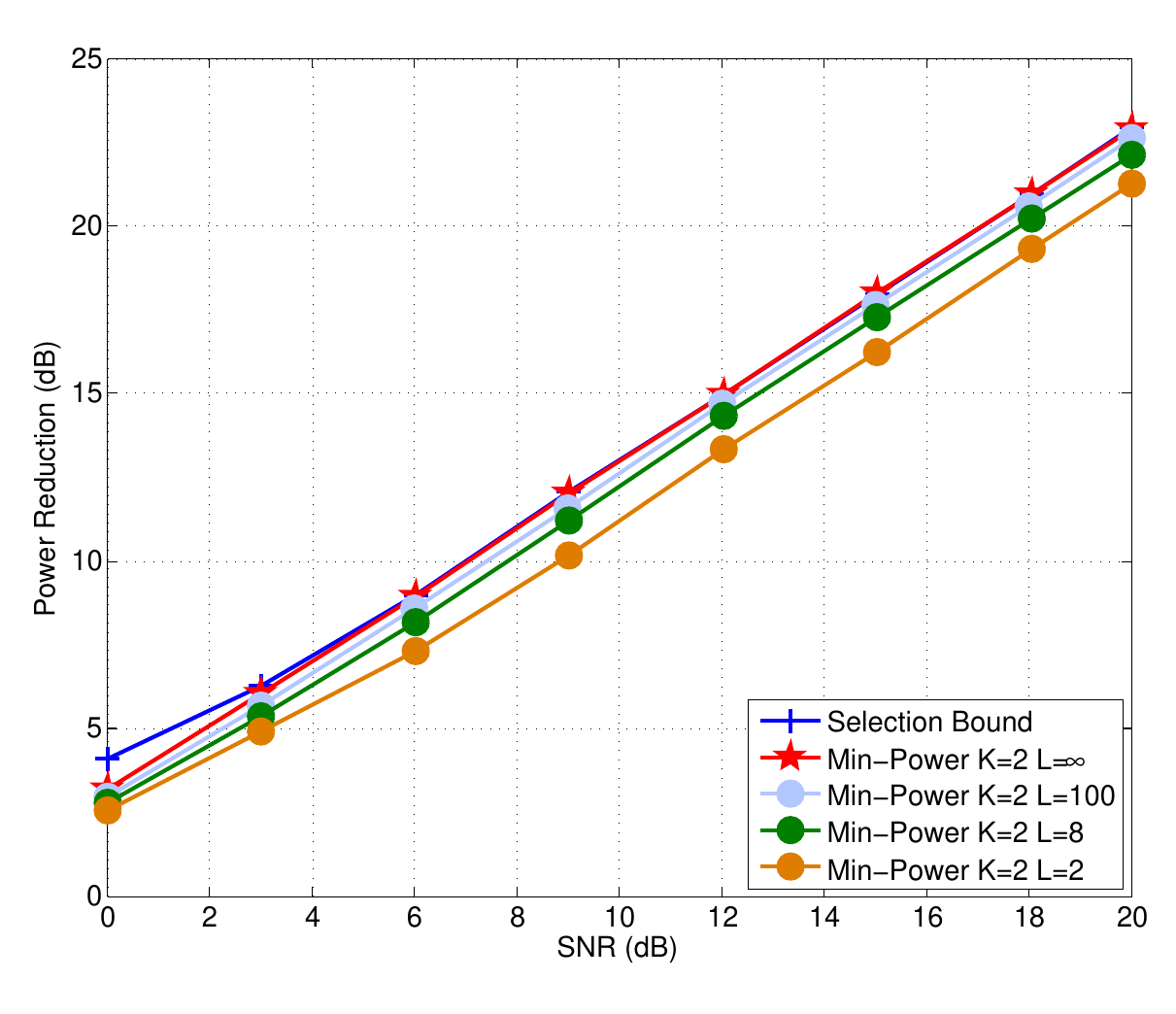}
\vspace{-0.48cm}
\caption{Power reduction for increasing transmit SNR for varying $L$ and $K=2$. For $L\geq8$, power differences are minor.}
\vspace{-.6cm}
\label{pow_K2Lv}
\end{figure}

In order to examine the effect of additional relays, we present the corresponding results in Fig.~\ref{pow_KvL8}. We can easily observe that adding more relays to the cluster, the relay selection alternatives increase, thus leading to improved power minimization. As fixed transmission rate is adopted, a saturation is observed since differences in the required power for successful transmissions are very close after $K=3$ relays.

\begin{figure}[h]
\vspace{-0.4cm}
\centering
\includegraphics[width=8cm,height=6.7cm]{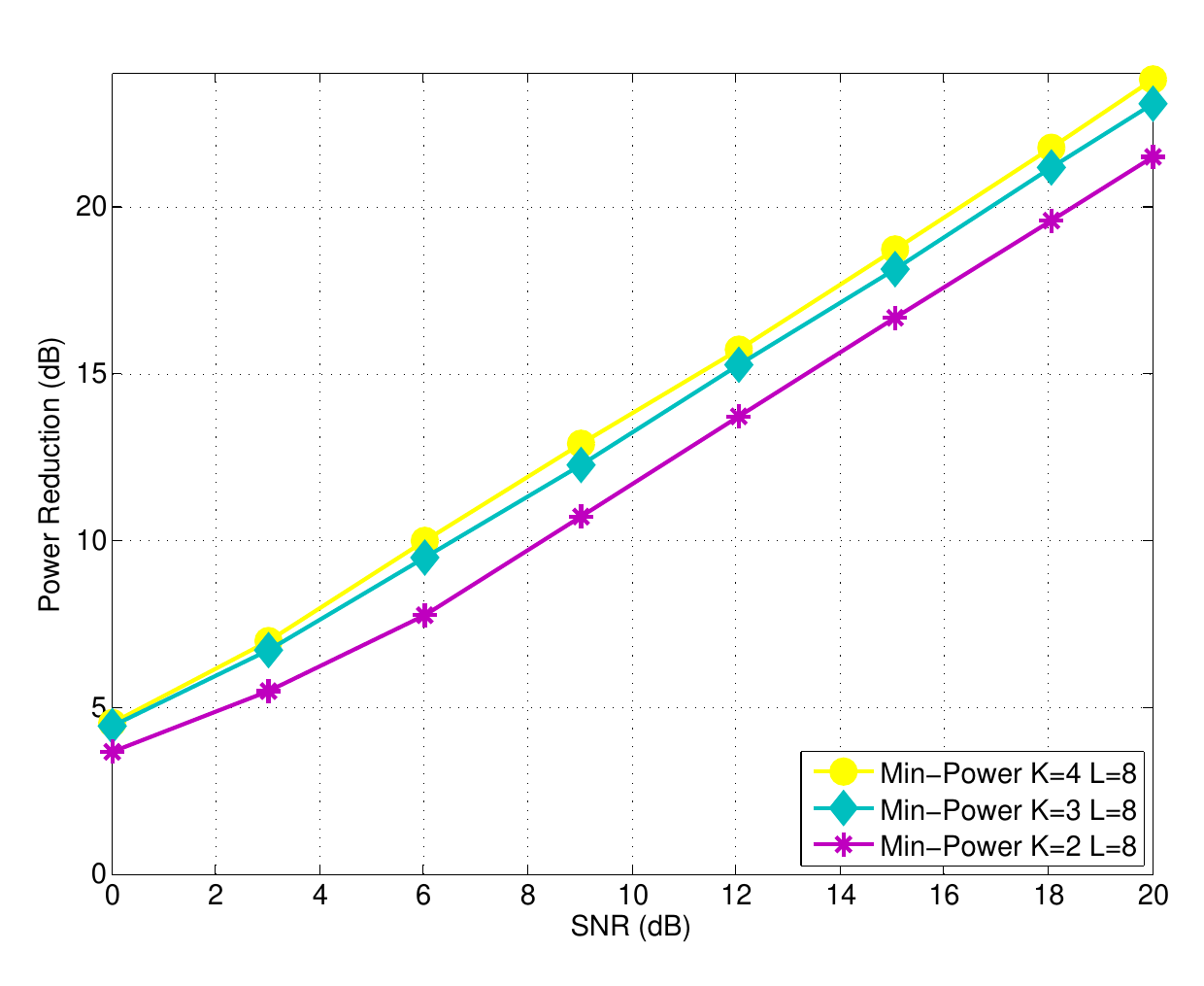}
\vspace{-0.48cm}
\caption{Power reduction for increasing transmit SNR for varying $K$ and $L=8$. Since fixed transmission rate is adopted, a saturation is observed as differences in the required power for transmission are very close after $K=3$ relays.}
\vspace{-.6cm}
\label{pow_KvL8}
\end{figure}

\subsection{Average Delay}

\begin{figure}[h]
\vspace{-0.2cm}
\centering
\includegraphics[width=8cm,height=6.7cm]{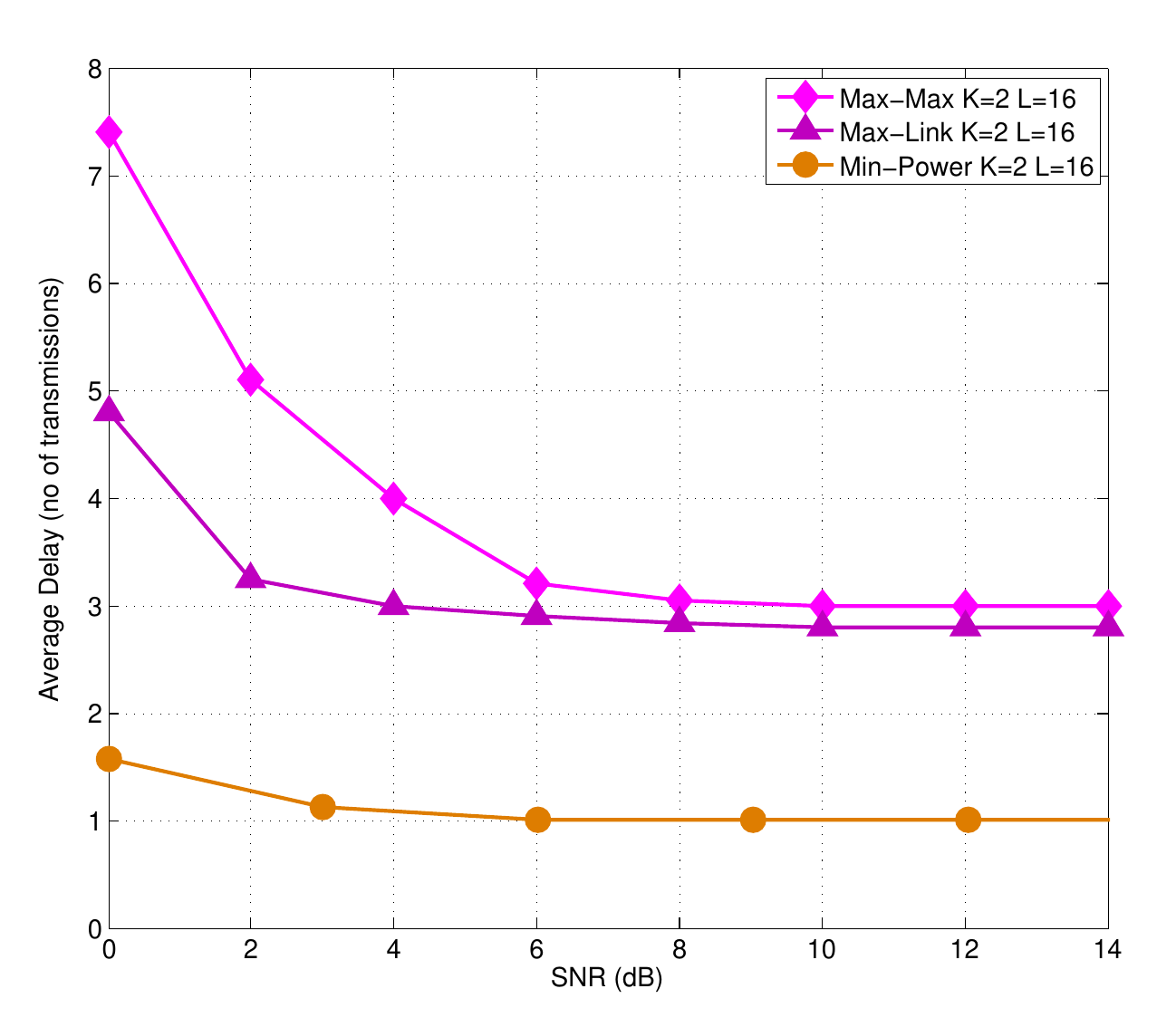}
\vspace{-0.48cm}
\caption{Average delay for increasing transmit SNR for $K=2$ and $L=16$. Due to the ability of two overlapping transmissions, the successive schemes achieve a full-duplex operation at high SNR.}
\vspace{-.6cm}
\label{delay}
\end{figure}

In the final set of comparisons we evaluate the average delay that each relay selection scheme incurs to the transmitted signals. From Fig.~\ref{delay}, we observe that in the low SNR regime there is a significant delay as single-link is the dominant transmission mode in that case. However, after 6dBs successive transmissions constitute the majority of the transmissions in the network and this is reflected on the average delay which is limited to only one transmission phase, thus achieving an almost full-duplex operation as the destination receives a new packet in each transmission phase. We note however, that for increased relay number and buffer size the average delay increases since some packets tend to remain in the relays' buffers for more transmission phases. In addition, the consideration of adaptive rate transmission may lead to better queue management by allowing relays to transmit with increased rate when an opportunity arises.
\vspace{-.2cm}

%
%
\section{Conclusions and Future Directions}\label{sec:conclusions}

Summarizing this paper, we proposed an opportunistic relaying protocol that minimizes the total power expenditure per time slot under an IRI cancellation scheme. Through power adaptation and buffer-aided relays, IRI is mitigated. It is the first time that interference cancellation is combined with buffer aided relays and power adaptation; the examples demonstrate the improvements achieved.
For the evaluation of the performance of the $\minpow$ relay selection policy, we performed comparisons with other schemes and the results showed that by combining successive transmission and a buffer-aided half-duplex protocol, gains were achieved in both outage and throughput performance.

Moreover, we studied the effect of buffer size and relay number on the outage, throughput and power reduction metrics. As buffer size and relays increase, the gain obtained by $\minpow$ vanishes. This observation suggests the need for additional approaches that will improve the potential of $\minpow$. A future direction includes the study of network coding schemes and exploitation of the IRI by using superposition coding. In addition, a major concern raised by $\minpow$ relay-pair selection is the additional delay introduced into the network. Compared to \cite{IKH2} and \cite{KRICHAR}, $\minpow$ allows two simultaneous transmissions and thus, has the potential to reduce the delay observed by these schemes. In addition, compared to \cite{IKH3}, the consideration of IRI in the selection process allows optimal relay-pair selection and a future combination with adaptive rate transmission can lead our scheme to a more efficient queue management by allowing the relays to receive and transmit with increased rates when the opportunity arises. This could possibly contribute to a reduction in the number of transmission phases that packets reside in the buffers of the relays.  As we considered a topology with  i.i.d. links, extending $\minpow$ with rate adaptation capabilities can leverage its deficits in non i.i.d. scenarios where fixed rate transmissions limit its performance. Finally, delay-aware characteristics should be implemented in the algorithm, where the selection process will prioritize relays which may experience packet losses due to excess delay.

%
%

%
%
\end{document}